\newtheorem{theorem}{Theorem}
\newtheorem{lemma}{Lemma}
\newtheorem{proposition}{Proposition}
\numberwithin{equation}{section}
\begin{document}
	\pagestyle{plain}
	
	
	
	
	\title{\LARGE\textbf{Global Existence and Completeness of Classical Solutions in Higher Dimensional Einstein-Klein-Gordon System}}

	\author{Mirda Prisma Wijayanto, Fiki Taufik Akbar, Bobby Eka Gunara\footnote{Corresponding author}\\ 
		\\
		\textit{\small Theoretical High Energy Physics Research Division, }\\
		\textit{\small Faculty of Mathematics and Natural Sciences,}\\
		\textit{\small Institut Teknologi Bandung}\\
		\textit{\small Jl. Ganesha no. 10 Bandung, Indonesia, 40132}
		\\  \\
		\\
		\small email: mirda.prisma.wijayanto@students.itb.ac.id, ftakbar@itb.ac.id, bobby@itb.ac.id}
	
	\date{\today}
	
	\maketitle
	
	
	

	\begin{abstract}
		In this paper we study the global existence and completeness of classical solutions of gravity coupled a scalar field system called Einstein-Klein-Gordon system in higher dimensions. We introduce a new ansatz function to reduce the problem into a single first-order integro-differential equation. Then, we employ the contraction mapping in the appropriate Banach space. Using Banach fixed theorem, we show that there exists a unique fixed point, which is the solution of the theory. For a given initial data, we prove the existence of both local and global classical solutions.  We also study the completeness properties of the spacetime. Here, we introduce a mass-like function for $D\geq 4$ in Bondi coordinates. The completeness of spacetime along the future directed timelike lines outward to a region which resembles the event horizon of the black hole.
	\end{abstract}
	
	%
	%
	%
	%
	%

	\section{Introduction}	
	Einstein's equation is one of the fundamental equations in physics. In General Relativity, this equation explains the dynamics of the spacetime geometry and the matter distribution within. Moreover, this equation is nothing but belongs to a class of nonlinear partial differential equation. Solving analytically  this equation remains  open  for the past few decades.
	
	Apart from the above problem, another attempt is to study the Cauchy problem for the Einstein equations. This problem was firstly started by Choquet-Bruhat \cite{Choquet}. The existence of local solutions to the Einstein equations has been proved \cite{Chrusciel,Friedrich1} using the harmonic coordinates. One of the most remarkable result is the global existence solutions to the vacuum Einstein equations for initial data close to the trivial one due to Christodoulou and Kleinerman \cite{ChrisKlein}. An extension of the stability theorem of the Minkowski space has been discovered by Bieri \cite{Bieri}.
	
	It is of interest to extend the above study to the case of gravity-scalar system because it describes an important toy model of gravitational collapse in general relativity. The global initial value problem for Einstein equations in the spherically symmetric case with a massless scalar field as the material model for the sufficiently small initial data was initiated by Christodoulou \cite{Chris1} (see \cite{Chris2} for the large initial data). Then, Christodoulou \cite{Chris3} prove that the self-similar solution of this system generate singularities in finite time for certain initial conditions. It was later found in \cite{Chris4} that there are instabilities in the naked singularity formation, and it is consistent with the cosmic censorship conjecture problem of this model. Malec \cite{Malec} studied the coupled Einstein and scalar system where Cauchy data is given on spacelike hypersurface, as well as by examining the local existence of the Cauchy problem outside outgoing null hypersurfaces. Furthermore, the global initial value problem of spherically symmetric solutions for Einstein-Scalar system or also known as Einstein-Klein-Gordon system has been studied by Chae \cite{Chae} and Wijayanto et al. \cite{Prisma} in four dimensions. There are also some related works \cite{Dafermos}-\cite{Luk} and references therein.
	
	\subsection{Motivation and Challenges}
	There is increasing interest in the generalizations of Einstein's general theory of relativity in higher dimensions \cite{Clifton}-\cite{Coley}. In fact, theories that attempt to unify forces, e.g. Kaluza-Klein theory, string theory, and superstring theory, often require higher spacetime dimensions to explain certain phenomena in physics, and often have different properties from four-dimensional theories. One of the differences can be seen when we study the asymptotic behavior of solutions to Einstein equations. The decay properties of the nontrivial vacuum solution in Minkowski spacetime near spatial infinity is of order $1/r^{D-3}$ as $r\rightarrow \infty$ for the fixed $t$. This behavior can be observed explicitly in the Schwarzschild metric, and corresponds to the decay of the point mass potential in Newtonian gravity. On the other hand, the decay properties from Minkowski spacetime near null infinity is of order $1/r^{(D-2)/2}$ as $r\rightarrow \infty$ for the fixed $u$. This behavior can be observed from the phenomenon of graviton propagation in Minkowski space. We can verify that for $D=4$ both decays have the same order. However, for $D>4$, the decay at null infinity is slower than the decay at the spatial infinity \cite{Hollands}. This result enables the development of new concepts and possibilities when we work in higher dimensions. Motivated by these developments, we are interested in extending the study of the global initial value problem of Einstein-Klein-Gordon system in higher dimensions ($D\geq 4$).
	
	In the present work, we also study the completeness properties of the spacetime. The nature of spacetime completeness is one of the open problems in General Relativity formulated by the cosmic censorship conjecture: there exists a solution of the Einstein equations for a given arbitrary asymptotically flat initial data which is a globally hyperbolic spacetime possessing a complete future null infinity. In the four dimensional case, Christodoulou \cite{Chris1} shows that the Einstein-Klein-Gordon system with $V=0$ for small initial data has a global solution in the future complete spacetime for $r_0>2\tilde{M}$, where $r_0$ represents the timelike line and $\tilde{M}$ is the Bondi mass in the four dimensions. As a novelty, we introduce a Bondi mass-like function $M$ for $D\geq 4$. 
	In order for the function $M$ to be defined as mass physically, it must satisfy the property of mass in General Relativity, namely the Positive Mass Theorem which states that for a nontrivial isolated physical system, the total energy (mass), which includes contributions from matter and gravity, is positive. This theorem has been proven by \cite{Schoen} for ADM mass defined at spatial infinity and by \cite{Ludvig} for Bondi mass defined at null infinity. According to the Positive Mass Theorem, we prove that $M$ is a monotonically nondecreasing function of $r$ at each $u$, and also a monotonically nonincreasing function of $u$. Finally, we show that the completeness of spacetime along the future directed timelike lines outward to a region which resembles the event horizon of the black hole.
	
	Let us introduce the radial coordinate $r$. We set that the central world line is in $r=0$, and the world lines $r=r_0$ in each half-plane are all timelike. We also introduce a timelike coordinate $u$ with the properties that $u$ is constant on the future light cone of each point on the central world line. On a world line $r=r_o$, $u$ tends to the proper time as $r_0\rightarrow \infty$. Hence, the metric of our spacetime can be written in the form
	\begin{eqnarray}\label{metric}
	\mathrm{d}s^2 = - e^{2F(u,r)}\mathrm{d}u^2 - 2e^{F(u,r)+G(u,r)}\mathrm{d}u\mathrm{d}r + r^2 \mathrm{d}\Omega^{D-2}.
	\end{eqnarray}
	The spatial metric $\mathrm{d}\Omega^{D-2}$ describes ($D-2$)-spatial compact manifold. In particular, if the $(D-2)$-spatial geometry is sphere, then the metric (\ref{metric}) becomes flat as $r$ goes to infinity. We define the functions $F$ and $G$ which tend to 0 as $r$ goes to infinity at each $u$  according to asymptotic Ricci flatness. Such coordinate system is called the Bondi coordinate system. Since we only consider the future of some initial future light cone with vertex at the center, the range of the coordinates $u$ and $r$ are $0\leq u<\infty$ and $0\leq r<\infty$ respectively. Here, we define $u=0$ as the initial future light cone. 
	
	The action of our system has the form
	\begin{eqnarray}\label{action}
	S=\int d^Dx \sqrt{-\det({g_{\mu\nu}})}\left\{\frac{1}{2}\partial_\mu\phi\partial^\mu\phi - V(\phi)\right\},
	\end{eqnarray}
	where $\det({g_{\mu\nu}})$ is the determinant of the metric (\ref{metric}), and $\phi$ is the real scalar field.  Let us denote the scalar potential $V(\phi)$ fulfils the estimate
	\begin{equation}\label{estimasi Potensial}
	|V(\phi)|+\left|\frac{\partial V(\phi)}{\partial \phi}\right||\phi|+\left|\frac{\partial^2V(\phi)}{\partial\phi^2}\right||\phi|^2\leq K_0 |\phi|^{p+1},
	\end{equation}
	where $K_0\geq 0$, and $p\in \mathbb{R}^+$. The values of $p$ are related to the decay of the order $k$ of the solutions. In the previous work \cite{Prisma}, we studied the decay properties of the four-dimensional Einstein-Klein-Gordon system for $k\in[3,\infty)$. In this paper, we extend this problem to higher dimensions, where the decay order depends on the spacetime dimension $D$ such that $k>\frac{D}{2}$, where $k\in\mathbb{R}^+$. Here, we also choose that $V(\phi)$ is a negative function.
	
	In the case of higher dimensions, we obtain the complete equations of motion of the Einstein-Klein-Gordon system in the following form
	\begin{align}
	\frac{\partial}{\partial u}\left[r\frac{\partial\phi}{\partial r}+\frac{(D-2)}{2}\phi\right]=&\frac{\tilde{g}}{2}\frac{\partial}{\partial r}\left[r\frac{\partial\phi}{\partial r}+\frac{(D-2)}{2}\phi\right]+\frac{1}{2}\left[\frac{\hat{R}(\hat{\sigma})}{(D-2)}g-\frac{(D-2)}{2}\tilde{g}\right]\frac{\partial\phi}{\partial r}\nonumber\\
	&+\frac{8\pi gr^2}{(D-2)}\frac{\partial\phi}{\partial r}V(\phi) + \frac{gr}{2}\frac{\partial V(\phi)}{\partial \phi},\nonumber
	\end{align}
	where $\tilde{g}=e^{F-G}$ and $g=e^{F+G}$. In the starting point, we simplify the above equation by defining a new ansatz function that is more general than previous works by \cite{Chris1,Chae,Prisma} as follows
	\begin{eqnarray}\label{h}
	h=r\frac{\partial\phi}{\partial r}+\frac{(D-2)}{2}\phi,
	\end{eqnarray}
	and
	\begin{eqnarray}\label{htilde}
	\phi :=\tilde{h}= r^{-\frac{(D-2)}{2}}\int_{0}^{r}hs^{\frac{D-4}{2}}\mathrm{d}s.
	\end{eqnarray}
	To proceed, we also define
	\begin{eqnarray}\label{g}
	g:=e^{F+G}=\exp \left[-\int_{r}^{\infty}\frac{8\pi}{(D-2)s}\left(h-\frac{(D-2)}{2}\tilde{h}\right)^2\mathrm{d}s\right],
	\end{eqnarray}
	and
	\begin{align}\label{gtilde}
	\tilde{g}:= e^{F-G}=\frac{\hat{R}(\hat{\sigma})}{(D-2)}\bar{g}-\frac{(D-4)}{(D-2)}\frac{\hat{R}(\hat{\sigma})}{r^{D-3}}\int_{0}^{r}\bar{g}s^{D-4}\mathrm{d}s+\frac{16\pi  }{(D-2)}\frac{1}{r^{D-3}}\int_{0}^{r} gs^{D-2}V(\tilde{h})\mathrm{d}s,
	\end{align}
	where $\bar{g}:=\bar{g}(u,r)=\frac{1}{r}\int_{0}^{r}g(u,s)\mathrm{d}s$. We denote $\hat{R}(\hat{\sigma})=\hat{\sigma}^{ij}\hat{R}_{ij}(\hat{\sigma})$ as the $(D-2)$-spatial Ricci scalar of compact manifold. In order to have a consistent picture, we assume that the $(D-2)$-spatial geometry has to be Einstein implying that the Ricci scalar $\hat{R}(\hat{\sigma})$ is constant. Furthermore, we require that $\hat{R}(\hat{\sigma})$ must also be positive according to the completeness properties of the spacetime. Another challenge we found was that the second term of the right-hand side of the equation (\ref{gtilde}) only appears in higher dimensions. To overcome this problem, we have to calculate additional estimates for all functions containing $\tilde{g}$. This is of course requires some new mathematical ideas especially when we study the properties of the mass-like functions, which are then used to analyze the completeness of spacetime in higher dimensions ($D\geq 4$).
	
	\subsection{Structure of the Paper}
	The structure of this paper is the following. In section \ref{sec2}, we construct the Einstein-scalar system in higher dimensions ($D\geq 4$). We use the coordinate system according to metric (\ref{metric}). Then, we construct the Einstein equation and the equation of motion respectively. In the last part, we use the ansatz function (\ref{h}) and (\ref{htilde}) to reduce the equation of motion into a single first-order evolution equation as the main equation of this problem. In section \ref{sec3}, we give the setup to prove the existence of classical solution consist of local and global existence. We employ the contraction mapping principle in the appropriate Banach space. This gives some arguments to prove the existence of the global solution in view of Banach fixed theorem. In the end of this section, we state the main theorem of global existence. In section \ref{sec4}, we study the completeness properties of the spacetime along the timelike lines. Finally, we write down the detailed calculation in the Appendix.
	
	\section{Higher Dimensional Einstein-Scalar System}
	\label{sec2}
	\subsection{Einstein Equation}
	It is of interest to write down the Ricci tensor related to metric (\ref{metric})
	\begin{eqnarray}
	R_{00}&=&-e^{F-G}\left[\partial_1\partial_0 (F+G) +\frac{(D-2)}{r}\partial_0 G\right]\nonumber\\
	&&+ e^{2(F-G)}\left[(\partial_1F)^2+\partial_1(\partial_1 F)-\partial_1F\partial_1 G +\frac{(D-2)}{r}\partial_1 F \right]\\
	R_{01}&=&R_{10}=-\partial_1\partial_0(F+G)\nonumber\\
	&&+e^{F-G}\partial_1(F-G)\partial_1 F + e^{F-G}\partial_1(\partial_1 F)+\frac{(D-2)}{r}e^{F-G}\partial_1 F\\
	R_{11}&=&\frac{(D-2)}{r}\partial_1(F+G)\\
	R_{ij}&=& \hat{R}_{ij}(\hat{\sigma}) - e^{-2G}\left[r\partial_1(F-G)+D-3\right]\hat{\sigma}_{ij}
	\end{eqnarray}
	where we have denoted $\partial_0=\frac{\partial}{\partial u}$, $\partial_1=\frac{\partial}{\partial r}$, $\hat{R}_{ij}(\hat{\sigma})$ is the $(D-2)$-spatial Ricci tensor of compact manifold, and $i,j=1,2,...,D-1$. Furthermore, we also have the scalar curvature
	\begin{eqnarray}
	R&=&\frac{1}{r^2}\hat{R}(\hat{\sigma})+2e^{-(F+G)}\partial_1\partial_0 (F+G)+e^{-2G}\left[-2(\partial_1 F)^2-2\partial_1(\partial_1 F)\right.\nonumber\\
	&&\left.+2(\partial_1G)(\partial_1 F)-2\frac{(D-2)}{r}\partial_1(F-G)-\frac{(D-2)(D-3)}{r^2}\right]
	\end{eqnarray}
		
	Now, we write the Einstein equation in higher dimensions
	\begin{eqnarray}\label{Einstein}
	R_{\mu\nu}=8\pi \left(T_{\mu\nu}-\frac{1}{(D-2)}g_{\mu\nu} T\right),
	\end{eqnarray}
	with the energy-momentum tensor
	\begin{eqnarray}\label{Energy-momentum tensor}
	T_{\mu\nu}=\partial_\mu\phi\partial_\nu \phi-\frac{1}{2}g_{\mu\nu}g^{\alpha\beta}\partial_\alpha\phi\partial_\beta\phi + g_{\mu\nu}V(\phi),
	\end{eqnarray}
	where we have denoted $T$ as the trace of $T_{\mu\nu}$.
	
	The $\{rr\}$ and $\{ij\}$ components of (\ref{Einstein}) satisfy
	\begin{eqnarray} 
	\partial_1(F+G)&=&\frac{8\pi r}{D-2}\left(\frac{\partial \phi}{\partial r}\right)^2,\label{rr}\\
	\partial_1(F-G)\hat{\sigma}_{ij}&=&-\frac{1}{r}\left[-e^{2G}\hat{R}_{ij}(\hat{\sigma})+(D-3)\hat{\sigma}_{ij}\right]+\frac{16\pi r e^{2G}}{(D-2)}V(\phi)\hat{\sigma}_{ij}\label{ij}
	\end{eqnarray}
	respectively. Additionally, the combination of the $\{00\}$ and $\{01\}$ components of (\ref{Einstein}) yields
	\begin{align} \label{00-01}
	\frac{(D-2)}{r}\frac{\partial G}{\partial u}=8\pi\left[\frac{\partial\phi}{\partial u}\frac{\partial\phi}{\partial r}-e^{-(F-G)}\left(\frac{\partial\phi}{\partial u}\right)^2\right].
	\end{align}
	As we can see from (\ref{rr}), (\ref{ij}), and (\ref{00-01}), there are no terms that contain the second derivative of $u$ and $r$ in $F $ and $G$. Thus, the metric functions $F$ and $G$ are no longer dynamical fields, but they simply become constraints. In the rest of the paper we will use (\ref{rr}) and (\ref{ij}) for our analysis rather than other components since it is easier and already common in the standard literature \cite{Chris1, Chae, Prisma}.
	
	\subsection{Equation of Motion}
	From the action (\ref{action}), we obtain the equation of motion
	\begin{eqnarray}\label{scalar}
	\Box \phi = -\frac{\partial V(\phi)}{\partial \phi},
	\end{eqnarray}
	where $\Box$ is defined as the d'Alembert operator with respect to the metric (\ref{metric}) as follows 
	\begin{eqnarray}\label{EoM}
	\Box\phi &=& -2 e^{-(F+G)}\left[\frac{\partial^2\phi}{\partial u \partial r}+\frac{(D-2)}{2r}\frac{\partial\phi}{\partial u}\right]\nonumber\\ &&+e^{-2G}\left[\frac{\partial^2\phi}{\partial r^2}+\frac{\partial\phi}{\partial r}\left(\frac{(D-2)}{r}+\frac{\partial F}{\partial r}-\frac{\partial G}{\partial r}\right)\right].
	\end{eqnarray}
		
	Now, we introduce a new operator
	\begin{eqnarray}
	\mathcal{D}=\frac{\partial}{\partial u}-\frac{\tilde{g}}{2}\frac{\partial}{\partial r},
	\end{eqnarray}
	describes the derivative along the incoming light rays parametrized by $u$. 
	Thus, we write (\ref{scalar}) as the single first-order integro-differential equation as follows
	\begin{eqnarray}\label{Dh higher}
	\mathcal{D}h&=&\frac{1}{2r}\left[\frac{\hat{R}(\hat{\sigma})}{(D-2)}g-\frac{(D-2)}{2}\tilde{g}\right]\left[h-\frac{(D-2)}{2}\tilde{h}\right]\nonumber\\
	&&+\frac{8\pi gr}{(D-2)}\left[h-\frac{(D-2)}{2}\tilde{h}\right]V(\tilde{h})+ \frac{gr}{2}\frac{\partial V(\tilde{h})}{\partial \tilde{h}},
	\end{eqnarray}
	where $g$ and $\tilde{g}$ are define in (\ref{g}) and (\ref{gtilde}) respectively. Equation (\ref{Dh higher}) provides the nonlinear evolution equation with respect to $h$ in higher dimensions.
	
	\section{Existence of Classical Solution}
	\label{sec3}
	Let us define a map $h\mapsto \mathcal{F}(h)$ as the solution of equation (\ref{Dh higher}), with the initial condition $\mathcal{F}(h)(0,r)=h(0,r)$.	Let $r(u)=\chi(u;r_0)$ be the solution of the ordinary differential equation
	\begin{equation} \label{PDB}
	\frac{\mathrm{d}r}{\mathrm{d}u}=-\frac{1}{2}\tilde{g}(u,r),~~~r(0)=r_0.
	\end{equation}
	We introduce the characteristic function $r_1=\chi(u_1;r_0)$. Then, from (\ref{PDB}) we obtain
	\begin{equation}\label{kondisi awal r}
	r_1 = r_0 - \frac{1}{2}\int_{0}^{u_1} \tilde{g}(u,\chi(u;r_0))~\mathrm{d}u.
	\end{equation}
	Therefore, we can write the \textcolor{blue}{differential} equation (\ref{Dh higher}) as the integral equation
	\begin{eqnarray}\label{Fcurl}
	&&\mathcal{F}(u_1,r_1)=h(0,r_0)\exp\left\{\int_{0}^{u_1} \left(\frac{1}{2r}\left[\frac{\hat{R}(\hat{\sigma})}{(D-2)}g-\frac{(D-2)}{2}\tilde{g}\right]+\frac{8\pi g r}{(D-2)}V(\tilde{h})\right)_\chi\mathrm{d}u'\right\}\nonumber\\
	&&+\int_{0}^{u_1} \exp\left\{\int_{u}^{u_1} \left(\frac{1}{2r}\left[\frac{\hat{R}(\hat{\sigma})}{(D-2)}g-\frac{(D-2)}{2}\tilde{g}\right]+\frac{8\pi g r}{(D-2)}V(\tilde{h})\right)_\chi\mathrm{d}u'\right\}[f]_\chi\mathrm{d}u,
	\end{eqnarray}
	where
	\begin{eqnarray}\label{f prisma}
	f=-\left(\frac{1}{2r}\left[\frac{\hat{R}(\hat{\sigma})}{(D-2)}g-\frac{(D-2)}{2}\tilde{g}\right]+\frac{8\pi g r}{(D-2)}V(\tilde{h})\right)\frac{(D-2)}{2}\tilde{h}+\frac{gr}{2}\frac{\partial V(\tilde{h})}{\partial \tilde{h}}.
	\end{eqnarray}
	
	Now, let us define
	\begin{equation}
	\mathcal{G}(u,r)=\frac{\partial \mathcal{F}}{\partial r}(u,r),
	\end{equation}
	satisfies
	\begin{align}\label{eq.G}
	\mathcal{D}\mathcal{G}=&\left[\frac{1}{2}\frac{\partial \tilde{g}}{\partial r}+\frac{1}{2r} \left(\frac{\hat{R}(\hat{\sigma})}{(D-2)}g-\frac{(D-2)}{2}\tilde{g}\right)+\frac{8\pi g r}{(D-2)}V(\tilde{h})\right]\mathcal{G}\nonumber\\
	& + \left[\frac{1}{2r}\frac{\partial}{\partial r}\left(\frac{\hat{R}(\hat{\sigma})}{(D-2)}g-\frac{(D-2)}{2}\tilde{g}\right)-\frac{1}{2r^2}\left(\frac{\hat{R}(\hat{\sigma})}{(D-2)}g-\frac{(D-2)}{2}\tilde{g}\right)\right.\nonumber\\
	&\left.+\frac{8\pi g r }{(D-2)} \frac{\partial V(\tilde{h})}{\partial \tilde{h}} \frac{\partial \tilde{h}}{\partial r}+\frac{8\pi g V(\tilde{h})}{(D-2)}+\frac{8\pi r V(\tilde{h})}{(D-2)}\frac{\partial g}{\partial r}\right]\left(\mathcal{F}-\frac{(D-2)}{2}\tilde{h}\right)\nonumber\\
	& -\left[\frac{(D-2)}{2}\left(\frac{1}{2r} \left(\frac{\hat{R}(\hat{\sigma})}{(D-2)}g-\frac{(D-2)}{2}\tilde{g}\right)+\frac{8\pi g r V(\tilde{h})}{(D-2)}\right)-\frac{g r }{2}\frac{\partial^2V(\tilde{h})}{\partial\tilde{h}^2}\right]\frac{\partial \tilde{h}}{\partial r}\nonumber\\
	&+\left[\frac{r}{2}\frac{\partial g}{\partial r}+\frac{g}{2}\right]\frac{\partial V(\tilde{h})}{\partial\tilde{h}},
	\end{align}
	with the initial condition $\mathcal{G}(0,r_0)=\frac{\partial h}{\partial r}(0,r_0)$. Using the characteristics (\ref{kondisi awal r}), we write the equation (\ref{eq.G}) as the integral equation
	\begin{align}\label{Gcurl}
	&\mathcal{G}(u_1,r_1)=\frac{\partial h}{\partial r}(0,r_0)\exp\left\{\int_{0}^{u_1}\left[\frac{1}{2}\frac{\partial \tilde{g}}{\partial r}+\frac{1}{2r} \left(\frac{\hat{R}(\hat{\sigma})}{(D-2)}g-\frac{(D-2)}{2}\tilde{g}\right)+\frac{8\pi g rV(\tilde{h})}{(D-2)}\right]_\chi\mathrm{d}u \right\}\nonumber\\
	&+\int_{0}^{u_1}\exp\left\{\int_{u}^{u_1}\left[\frac{1}{2}\frac{\partial \tilde{g}}{\partial r}+\frac{1}{2r} \left(\frac{\hat{R}(\hat{\sigma})}{(D-2)}g-\frac{(D-2)}{2}\tilde{g}\right)+\frac{8\pi g rV(\tilde{h})}{(D-2)}\right]_\chi\mathrm{d}u' \right\}[f_1]_\chi~\mathrm{d}u,
	\end{align}
	where
	\begin{align} \label{f1}
	f_1 =& \left[\frac{1}{2r}\frac{\partial}{\partial r}\left(\frac{\hat{R}(\hat{\sigma})}{(D-2)}g-\frac{(D-2)}{2}\tilde{g}\right)-\frac{1}{2r^2}\left(\frac{\hat{R}(\hat{\sigma})}{(D-2)}g-\frac{(D-2)}{2}\tilde{g}\right)+\frac{8\pi g r }{(D-2)} \frac{\partial V(\tilde{h})}{\partial \tilde{h}} \frac{\partial \tilde{h}}{\partial r}\right.\nonumber\\
	&\left.+\frac{8\pi g V(\tilde{h})}{(D-2)}+\frac{8\pi r V(\tilde{h})}{(D-2)}\frac{\partial g}{\partial r}\right]\left(\mathcal{F}-\frac{(D-2)}{2}\tilde{h}\right)+\left[\frac{r}{2}\frac{\partial g}{\partial r}+\frac{g}{2}\right]\frac{\partial V(\tilde{h})}{\partial\tilde{h}}\nonumber\\
	& -\left[\frac{(D-2)}{2}\left(\frac{1}{2r} \left(\frac{\hat{R}(\hat{\sigma})}{(D-2)}g-\frac{(D-2)}{2}\tilde{g}\right)+\frac{8\pi g r V(\tilde{h})}{(D-2)}\right)-\frac{g r }{2}\frac{\partial^2V(\tilde{h})}{\partial\tilde{h}^2}\right]\frac{\partial \tilde{h}}{\partial r}.
	\end{align}
	
	Suppose that equation (\ref{Dh higher}) has two different solutions namely $h_1$ and $h_2$, with  $h_1(0,r_1)=h_2(0,r_2)$. Let us define  $\Theta=\mathcal{F}(h_1)-\mathcal{F}(h_2)$, and $g_l=g(h_l)$, $\mathcal{F}_l=\mathcal{F}(h_l)$, $\mathcal{G}_l=\mathcal{G}(h_l)$ for $l=1,2$. 	As previously, we define the characteristic $\chi_l=\chi_l(u,r)$ as follows
	\begin{align}
	\frac{\mathrm{d}r}{\mathrm{d}u}=-\frac{\tilde{g}_1}{2}(\chi_1(u,r),u);~~~r(0)=r_0.
	\end{align}
	Then, we write $\Theta$ as the integral equation
	\begin{align}\label{Theta}
	\Theta(u_1,r_1)=\int_{0}^{u_1} \exp\left\{\int_{u}^{u_1} \left[\frac{1}{2r}\left(\frac{\hat{R}(\hat{\sigma})}{(D-2)}g_1-\frac{(D-2)}{2}\tilde{g}_1\right)+\frac{8\pi r g_2V(\tilde{h}_1)}{(D-2)}\right]_{\chi_1}~\mathrm{d}u'\right\}[\tilde{\psi}]_{\chi_1}\mathrm{d}u\;,
	\end{align}
	where 
	\begin{eqnarray} \label{psi}
	\tilde{\psi}&=&\frac{1}{2}\left(\tilde{g}_1-\tilde{g}_2\right)\mathcal{G}_2-\frac{1}{2r}\left[\frac{\hat{R}(\hat{\sigma})}{(D-2)}g_1-\frac{(D-2)}{2}\tilde{g}_1\right]\frac{(D-2)}{2}\left(\tilde{h}_1 -\tilde{h}_2\right)\nonumber\\
	&&+\frac{1}{2r}\left(\frac{\hat{R}(\hat{\sigma})}{(D-2)}(g_1-g_2)-\frac{(D-2)}{2}(\tilde{g}_1-\tilde{g}_2)\right)\mathcal{F}_2+\frac{8\pi r g_2}{(D-2)}\left(V(\tilde{h}_1)-V(\tilde{h}_2)\right)\mathcal{F}_2\nonumber\\
	&&-\frac{1}{2r}\left(\frac{\hat{R}(\hat{\sigma})}{(D-2)}(g_1-g_2)-\frac{(D-2)}{2}(\tilde{g}_1-\tilde{g}_2)\right)\frac{(D-2)}{2}\tilde{h}_2\nonumber\\
	&&+\frac{8\pi r}{(D-2)}(g_1-g_2)\mathcal{F}_1V(\tilde{h}_1)-4\pi r(g_1-g_2)\tilde{h}_1 V(\tilde{h}_1)\nonumber\\
	&&-4\pi r g_2(\tilde{h}_1 - \tilde{h}_2)V(\tilde{h}_2)-4\pi r g_2 \tilde{h}_1\left(V(\tilde{h}_1)-V(\tilde{h}_2)\right)+\frac{r}{2}(g_1-g_2)\tilde{h}_1\frac{\partial^2 V(\tilde{h}_1)}{\partial\tilde{h}_1^2}\nonumber\\
	&&+\frac{r}{2}g_2(\tilde{h}_1-\tilde{h}_2)\frac{\partial^2 V(\tilde{h}_1)}{\partial\tilde{h}_1^2} +\frac{r}{2}g_2\tilde{h}_1\left(\frac{\partial^2 V(\tilde{h}_1)}{\partial\tilde{h}_1^2}-\frac{\partial^2 V(\tilde{h}_2)}{\partial\tilde{h}_2^2}\right).
	\end{eqnarray}
	
	\subsection{Local Existence}
	This section is devoted to prove the existence of local classical solution to (\ref{Dh higher}). For the starting point, let us define a function space
	\begin{eqnarray}\label{space Xhat}
	\hat{X} =\{h\in C^1([0,u_0] \times [0,\infty) )\; | \;\|h\|_{\hat{X}} < \infty\}\:, 
	\end{eqnarray}
	equipped with the norm
	\begin{eqnarray} \label{hx hat}
	\|h\|_{\hat{X}} =\sup_{u\in[0,u_0]} \sup_{r\geq 0}\left\{ (1+r+u)^{k-1}|h(u,r)| + (1+r+u)^{k}\left|\frac{\partial h}{\partial r}(u,r)\right|  \right\}\:, 
	\end{eqnarray}
	with $k>\frac{D}{2}$, where $k\in \mathbb{R}^+$, and $D\geq 4$. Then, we introduce
	\begin{eqnarray}
	\hat{X}_0=\{h\in C^1([0,u_0])\; | \;\|h\|_{\hat{X}_0} < \infty\}\:,
	\end{eqnarray}
	with its norm
	\begin{eqnarray}
	\|h\|_{\hat{X}_0} =\sup_{r\geq 0}\left\{ (1+r)^{k-1}|h(r)| + (1+r)^{k}\left|\frac{\partial h}{\partial r}(r)\right|  \right\}\:.
	\end{eqnarray}
	We also introduce the space $\hat{Y}$ containing $\hat{X}$ as follows
	\begin{eqnarray}\label{space Yhat}
	\hat{Y}=\{h\in C^1([0,u_0] \times [0,\infty) )\; | \; h(0,r) = h_0(r), \|h\|_{\hat{Y}} < \infty\}\:,
	\end{eqnarray}
	with
	\begin{eqnarray}\label{norm Yhat}
	\|h\|_{\hat{Y}} =\sup_{u\in[0,u_0]} \sup_{r\geq 0}\left\{ (1+r+u)^{k-1}|h(u,r)|\right\}\:,
	\end{eqnarray}
	The space function $\hat{X}, \hat{X}_0,$ and $\hat{Y}$ are Banach spaces. Let us denote $\|h\|_{\hat{X}}=\hat{x}$, $\|h(0,.)\|_{\hat{X}_0}=\hat{d}$, and $\|h_1-h_2\|_{\hat{Y}}=\hat{y}$.
	
	\begin{lemma}
	Let $D\geq 4$. For any $\mathcal{L}_1(\hat{x})>\hat{d}$, there exists $\delta(\hat{x},\hat{d})>0$ such that if $u_0<\delta$, the map $\mathcal{F}$ defined in (\ref{Fcurl}) is contained in the closed ball of radius $\hat{x}$ in the space $\hat{X}$. 
	\end{lemma}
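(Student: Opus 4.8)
The plan is to show $\mathcal F$ maps the closed ball $B_{\hat x}=\{h\in\hat X:\|h\|_{\hat X}\le\hat x\}$ into itself once $u_0$ is small, working from the integral representations (\ref{Fcurl}) for $\mathcal F$ and (\ref{Gcurl}) for $\mathcal G=\partial_r\mathcal F$. First I would collect a priori bounds, valid for every $h$ with $\|h\|_{\hat X}\le\hat x$, on the auxiliary quantities entering those formulas. From (\ref{g}) the exponent is manifestly non-positive, so $0<g\le1$, and $\partial_r g$ is controlled by $g\,(h-\tfrac{D-2}{2}\tilde h)^2/r$. From (\ref{htilde}), since $k>D/2$ makes $s^{(D-4)/2}(1+s+u)^{-(k-1)}$ integrable, one gets $|\tilde h|\le C(\hat x)(1+r+u)^{-(D-2)/2}$ and, via $\partial_r\tilde h=\tfrac1r(h-\tfrac{D-2}{2}\tilde h)$, a comparable bound for $\partial_r\tilde h$ — note this decay is strictly slower than that of $h$. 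From (\ref{gtilde}) one reads off $0<\tilde g\le C_0$, with the sharper $\tilde g\le1$ available from $g\le1$ and $V<0$, and one bounds $\partial_r\tilde g$. Finally (\ref{estimasi Potensial}) gives $|V(\tilde h)|,\,|\partial_{\tilde h}V|,\,|\partial_{\tilde h}^2V|\le K_0(|\tilde h|^{p+1}+|\tilde h|^{p}+|\tilde h|^{p-1})$. A point needing care already at this stage is the behaviour at $r=0$: one must check that $h-\tfrac{D-2}{2}\tilde h$ and $\tfrac{\hat R(\hat\sigma)}{D-2}g-\tfrac{D-2}{2}\tilde g$ vanish there to an order sufficient to absorb the explicit $1/r$ in (\ref{Fcurl}), (\ref{f prisma}), (\ref{f1}), so that those integrands extend continuously across $r=0$ and obey $|f|\le C(\hat x)(1+r+u)^{-(k-1)}$ and $|f_1|\le C(\hat x)(1+r+u)^{-k}$; achieving these rates is precisely what forces $p$ to be taken large in terms of $k$ and $D$.

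Next I would control the backward characteristic $u\mapsto\chi(u;r_0)$ from $(0,r_0)$ to $(u_1,r_1)$ defined by (\ref{kondisi awal r}). Since $0<\tilde g\le1$ we get $r_1\le\chi(u;r_0)\le r_0$ and $1+r_0\le1+r_1+u_1$, so the three weights $1+r_0$, $1+\chi(u;r_0)+u$ and $1+r_1+u_1$ are mutually comparable, all ratios lying in $[(1+Cu_0)^{-1},1+Cu_0]$; in particular $(1+r_1+u_1)^{k-1}(1+r_0)^{-(k-1)}\le(1+Cu_0)^{k-1}$. By the a priori bounds, the integrands in the exponents of (\ref{Fcurl}) and (\ref{Gcurl}) are bounded by $C(\hat x)$ uniformly in $(u,r)$ — including across $r=0$ by the cancellation above, and for (\ref{Gcurl}) also using the bound on $\partial_r\tilde g$ — so each exponential factor is at most $e^{C(\hat x)u_0}$.

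With these bounds in hand the estimate is a routine Gr\"onwall-type computation. In (\ref{Fcurl}), the first (boundary) term, multiplied by $(1+r_1+u_1)^{k-1}$, is bounded by $\hat d\,(1+Cu_0)^{k-1}e^{C(\hat x)u_0}$ — using $(1+r_0)^{k-1}|h(0,r_0)|\le\hat d$, the weight comparison, and the bound on the exponential — while the second (source) term, similarly weighted, is bounded by $C(\hat x)\,u_0\,(1+Cu_0)^{k-1}e^{C(\hat x)u_0}$, using in addition $\chi(u;r_0)\ge r_1$ and $|f|\le C(\hat x)(1+r+u)^{-(k-1)}$. The same argument on (\ref{Gcurl}), with weight $(1+r_1+u_1)^{k}$, the bound $(1+r_0)^{k}|\partial_r h(0,r_0)|\le\hat d$, and $|f_1|\le C(\hat x)(1+r+u)^{-k}$, gives the matching estimate for $\mathcal G$. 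Taking suprema over $(u_1,r_1)$ and adding, I obtain $\|\mathcal F(h)\|_{\hat X}\le\hat d\,\omega(u_0)+C(\hat x)\,u_0\,\omega(u_0)$ with $\omega(u_0)=(1+Cu_0)^{k}e^{C(\hat x)u_0}\to1$ as $u_0\to0$. Since the right-hand side tends to $\hat d$, as soon as $\hat d$ lies strictly below the target radius — which is the content of the hypothesis $\hat d<\mathcal L_1(\hat x)$ — there is $\delta(\hat x,\hat d)>0$ such that $u_0<\delta$ forces $\|\mathcal F(h)\|_{\hat X}\le\hat x$, i.e.\ $\mathcal F(B_{\hat x})\subset B_{\hat x}$.

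The genuinely delicate part is the bookkeeping of the first step, not the Gr\"onwall estimate. Two things must be nailed down. First, the $1/r$ coefficients in (\ref{Fcurl}), (\ref{f prisma}), (\ref{f1}) have to be shown harmless, which rests on the precise cancellations built into (\ref{g})--(\ref{gtilde}) near $r=0$; in $D\ge5$ this involves the genuinely new second term on the right of (\ref{gtilde}), whose behaviour at the origin and whose contribution to $\partial_r\tilde g$ must be treated carefully. Second, because $\tilde h$ decays only like $r^{-(D-2)/2}$, strictly slower than $h$, the terms of $f$ and $f_1$ linear in $\tilde h$, together with all the potential terms $\tfrac{gr}{2}\partial_{\tilde h}V(\tilde h)$, $\tfrac{8\pi gr}{D-2}V(\tilde h)\mathcal F$, etc., only carry the required weight $(1+r+u)^{-(k-1)}$ (resp.\ $(1+r+u)^{-k}$) when $p$ in (\ref{estimasi Potensial}) is large enough; pinning down this lower bound on $p$ explicitly, uniformly in $D\ge4$, is the heart of the matter and is what the choice $k>D/2$ is tailored to. A subsidiary point, needed for the weight comparison, is to confirm $\tilde g\le1$ (or at least $\tilde g\le C_0(\hat x)$ with the ratios above still tending to $1$ as $u_0\to0$).
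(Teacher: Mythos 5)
Your proposal follows essentially the same route as the paper's proof: weighted a priori bounds on $\tilde h$, $g$, $\bar g$, $\tilde g$ and the potential terms, control of the characteristic so that the weights $(1+r_0)$, $(1+\chi(u;r_0)+u)$ and $(1+r_1+u_1)$ are comparable, and then estimation of the boundary and source terms in the integral representations (\ref{Fcurl}) and (\ref{Gcurl}) to get ball invariance for small $u_0$; the estimates you defer (the weighted decay of $f$, $f_1$ and $\partial_r\tilde g$, including the extra term of (\ref{gtilde}) present for $D\geq 5$) are exactly the computations that constitute the body of the paper's proof. Your bookkeeping of the initial-data contribution, $\hat d\,\omega(u_0)$ with $\omega(u_0)\rightarrow 1$ as $u_0\rightarrow 0$, is if anything cleaner than the paper's estimate (\ref{Fcurl local}), which attaches an overall factor $\delta$ to $\hat d$ and folds it into the definition of $\mathcal{L}_1$.
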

	\begin{proof}
		Using the definition $\|h\|_{\hat{X}}=\hat{x}$, we obtain the estimate for (\ref{htilde}) as follows
		\begin{eqnarray}\label{htilde local}
		|\tilde{h}|&\leq& r^{-\frac{(D-2)}{2}}\int_{0}^{r}\frac{\|h\|_{\hat{X}}}{(1+s+u)^{k-1}}s^{\frac{D-4}{2}}\mathrm{d}s\nonumber\\
		&\leq&\frac{2\hat{x}}{(2k-D)}\frac{1}{(1+u)^{\frac{(2k-D)}{2}}(1+r+u)^{\frac{(D-2)}{2}}}.
		\end{eqnarray}
		Thus,
		\begin{eqnarray}\label{h-htilde}
		\left|h-\frac{(D-2)}{2}\tilde{h}\right|\leq |h|+\frac{(D-2)}{2}|\tilde{h}|\leq \frac{C\hat{x}}{(1+u)^{\frac{(2k-D)}{2}}(1+r+u)^{\frac{(D-2)}{2}}},
		\end{eqnarray}
		for any $C:=C(k,D)>0$ depends on $k$ and dimensions $D$. 
		
		We calculate
		\begin{eqnarray}
		\int_{r}^{\infty}\frac{1}{s}\left|h-\frac{(D-2)}{2}\tilde{h}\right|^2\mathrm{d}s\leq \frac{\hat{x}^2}{(D-2)(1+u)^{2k-D}(1+r+u)^{D-2}},
		\end{eqnarray}
		such that from (\ref{g}) we have
		\begin{eqnarray}\label{estimate g}
		|g(u,r)| \geq \exp \left[-\frac{8\pi \hat{x}^2}{(D-2)^2(1+u)^{2k-D}(1+r+u)^{D-2}}\right].
		\end{eqnarray}
		Then, from (\ref{h-htilde}) we obtain
		\begin{align}
		|g(u,r)-g(u,r')|\leq&\int_{r'}^{r}\left|\frac{\partial g}{\partial s}(u,s) \right|\mathrm{d}s\leq \frac{8\pi}{(D-2)} \int_{r'}^{r}\frac{1}{s}\left|h-\frac{(D-2)}{2}\tilde{h}\right|^2\mathrm{d}s\nonumber\\
		=& \frac{8\pi}{(D-2)^2}\frac{\hat{x}^2}{(1+u)^{2k-D}}\left[\frac{1}{(1+r'+u)^{D-2}}-\frac{1}{(1+r+u)^{D-2}}\right].
		\end{align}
		Using the mean value formula, we have
		\begin{eqnarray}\label{g-gbar}
		|(g-\bar{g})(u,r)|\leq \frac{8\pi \hat{x}^2}{(D-3)(D-2)^2}\frac{r^{D-4}}{(1+u)^{2k-3}(1+r+u)^{D-3}}.
		\end{eqnarray}
		Combining (\ref{estimate g}) and (\ref{g-gbar}), and using triangle inequality we get
		\begin{eqnarray}\label{estimate gbar}
		|\bar{g}(u,r)|\geq\frac{8\pi \hat{x}^2}{(D-3)(D-2)^2}\frac{1}{(1+u)^{2k-3}(1+r+u)},
		\end{eqnarray}
		and
		\begin{align}\label{integral gbar}
		\frac{(D-4)}{2}\frac{\hat{R}(\hat{\sigma})}{r^{D-3}}\int_{0}^{r}|\bar{g}|s^{D-4}\mathrm{d}s\leq \frac{C\hat{x}^2}{(1+u)^{2k-3}(1+r+u)}.
		\end{align}
		From (\ref{htilde local}), we represent
		\begin{align}\label{integral shbarg}
		\frac{8\pi}{r^{D-3}}\int_{0}^{r} gs^{D-2}|V(\tilde{h})|\mathrm{d}s\leq \frac{C\hat{x}^{p+1}}{(1+u)^{k^2-D}(1+r+u)^{D-3}},
		\end{align}
		for $p\in[k,\infty)$, where $p,k \in \mathbb{R}^+$. Thus, from (\ref{gtilde}) we have
		\begin{eqnarray}\label{g-gtilde k}
		\left|\frac{\hat{R}(\hat{\sigma})}{(D-2)}g-\frac{(D-2)}{2}\tilde{g}\right|&\leq&|g-\bar{g}|	+\frac{(D-4)}{2}\frac{\hat{R}(\hat{\sigma})}{r^{D-3}}\int_{0}^{r}|\bar{g}|s^{D-4}\mathrm{d}s\nonumber\\
		&&+\frac{8\pi}{r^{D-3}}\int_{0}^{r} gs^{D-2}|V(\tilde{h})|\mathrm{d}s\nonumber\\
		&\leq& \frac{C(\hat{x}^2+\hat{x}^{p+1})}{(1+u)^{2k-3}(1+r+u)}.
		\end{eqnarray}
		Combining (\ref{htilde local}) and (\ref{g-gtilde k}), we write the estimate
		\begin{eqnarray}\label{fk local}
		|f|\leq \frac{C(\hat{x}^3+\hat{x}^p+\hat{x}^{p+2})}{(1+u)^{\frac{(2k-D)k}{2}}(1+r+u)^{\frac{(D-2)k-2}{2}}}.
		\end{eqnarray}
		
		From the definition $\hat{d}=\|h(0,)\|_{\hat{X}_0}$, we obtain
		\begin{eqnarray}\label{h local}
		|h(0,r)|\leq \frac{C\hat{d}}{(1+r)^{k-1}}.
		\end{eqnarray}
		
		We assume that $h(u,r)$ is defined in $\mathcal{I}=[0,\delta]$ that contains $[0,u_0]$ such that $\delta\in \mathcal{I}$ but $\delta\notin [0,u_0]$. Hence, we estimate the exponential term of (\ref{Fcurl}) from $u\in[0,u_0]$ up to $u'=\delta>u_0$ such that
		\begin{align}\label{delta1}
		\int_{0}^{\delta} \left[\frac{1}{2r}\left|\frac{\hat{R}(\hat{\sigma})}{(D-2)}g-\frac{(D-2)}{2}\tilde{g}\right|+\frac{8\pi g r|V(\tilde{h})|}{(D-2)}\right]\mathrm{d}u\leq C(\hat{x}^2+\hat{x}^{p+1})\delta.
		\end{align}
		Combining (\ref{fk local}), (\ref{h local}), and (\ref{delta1}), we obtain the estimate for (\ref{Fcurl}) as follows
		\begin{eqnarray}\label{Fcurl local}
		|\mathcal{F}(\delta,r)|\leq\frac{\hat{C}_1(\hat{d}+\hat{x}^3+\hat{x}^p+\hat{x}^{p+2})\delta\exp\left[\hat{C}_2(\hat{x}^2+\hat{x}^{p+1})\delta\right]}{(1+r)^{k-1}},
		\end{eqnarray}
		for any $\hat{C}_1, \hat{C}_2 >0$ depends on $k$ and dimensions $D$.
		
		Now, we calculate
		\begin{align}\label{dgtilde dr}
		\left|\frac{\partial \tilde{g}}{\partial r}\right|\leq&\frac{\hat{R}(\hat{\sigma})}{(D-2)}\frac{|g-\bar{g}|}{r}+\frac{\hat{R}(\hat{\sigma})(D-4)(D-3)}{(D-2)r^{D-2}}\int_{0}^{r}|\bar{g}|s^{D-4}\mathrm{d}s+\frac{\hat{R}(\hat{\sigma})(D-4)}{(D-2)}\frac{|\bar{g}|}{r}\nonumber\\
		&+\frac{(D-3)16\pi}{(D-2)r^{D-2}}\int_{0}^{r} \left|g\right|s^{D-2}\left|V(\tilde{h})\right|\mathrm{d}s+\frac{16\pi |g|r\left|V(\tilde{h})\right|}{(D-2)}.
		\end{align}
		From (\ref{g-gbar}), we get
		\begin{align}\label{g-gbar r}
		\frac{|g-\bar{g}|}{r}\leq \frac{C\hat{x}^2}{(1+u)^{2k-3}(1+r+u)^2}.
		\end{align}
		Using (\ref{estimate gbar}), we obtain
		\begin{align}
		\frac{\hat{R}(\hat{\sigma})(D-4)(D-3)}{(D-2)r^{D-2}}\int_{0}^{r}|\bar{g}|s^{D-4}\mathrm{d}s\leq\frac{C\hat{x}^2}{(1+u)^{2k-3}(1+r+u)^2},
		\end{align}
		and
		\begin{align}
		\frac{(D-4)}{(D-2)}\hat{R}(\hat{\sigma})\frac{1}{r}|\bar{g}|\leq\frac{8\pi \hat{x}^2}{(D-3)(D-2)^2(1+u)^{2k-3}(1+r+u)^2}.
		\end{align}
		Then, using (\ref{htilde}) we have
		\begin{align}
		\frac{(D-3)16\pi}{(D-2)r^{D-2}}\int_{0}^{r} \left|g\right|s^{D-2}\left|V(\tilde{h})\right|\mathrm{d}s\leq\frac{C\hat{x}^{p+1}}{(1+u)^{k^2-D}(1+r+u)^{D-2}},
		\end{align}
		and
		\begin{align}
		\frac{16\pi gr}{(D-2)}|V(\tilde{h})|\leq \frac{C\hat{x}^{p+1}r}{(1+u)^{\frac{(2k-D)(k+1)}{2}}(1+r+u)^{\frac{(D-2)(k+1)}{2}}}.
		\end{align}
		Furthermore, we have
		\begin{align}\label{gtilde higher}
		\frac{(D-2)}{2}\left|\frac{\partial\tilde{g}}{\partial r}\right|\leq \frac{C(\hat{x}^2+\hat{x}^{k+1}+\hat{x}^{p+1})r}{(1+u)^{2k-3}(1+r+u)^3}.
		\end{align}
		From (\ref{estimate g}) we estimate
		\begin{eqnarray}\label{gr higher}
		\frac{\hat{R}(\hat{\sigma})}{(D-2)}\left|\frac{\partial g}{\partial r}\right|\leq\frac{C\hat{x}^2r}{(1+u)^{2k-D}(1+r+u)^{D}}.
		\end{eqnarray}
		Combination of (\ref{gtilde higher}) and (\ref{gr higher}), yields
		\begin{eqnarray}\label{B1}
		\left|\frac{1}{2r}\left(\frac{\hat{R}(\hat{\sigma})}{(D-2)}\frac{\partial g}{\partial r}-\frac{(D-2)}{2}\frac{\partial\tilde{g}}{\partial r}\right)\right|\leq\frac{C(\hat{x}^2+\hat{x}^{k+1}+\hat{x}^{p+1})}{(1+u)^{2k-D}(1+r+u)^3}.
		\end{eqnarray}
		On the other hand, from (\ref{htilde}) we obtain the relation $\left|\frac{\partial \tilde{h}}{\partial r}\right|=\frac{\left|h-\frac{(D-2)}{2}\tilde{h}\right|}{r}$ such that
		\begin{eqnarray}\label{B2}
		\left|\frac{8\pi g r}{(D-2)} \frac{\partial V(\tilde{h})}{\partial \tilde{h}} \frac{\partial \tilde{h}}{\partial r}\right|
		\leq\frac{C\hat{x}^{p+1}}{(1+u)^{\frac{(2k-D)(k+1)}{2}}(1+r+u)^{\frac{(D-2)(k+1)}{2}}}.
		\end{eqnarray}
		In the view of (\ref{B2}), (\ref{B1}), (\ref{g-gtilde k}), and (\ref{htilde}) we obtain
		\begin{align}\label{f1 local}
		|f_1|\leq&\frac{C(\hat{x}^2+\hat{x}^{k+1}+\hat{x}^{p+1}+\hat{x}^{p+3})}{(1+u)^{2k-D}(1+r+u)^3}|\mathcal{F}|+\frac{C(\hat{x}^3+\hat{x}^{k+2}+\hat{x}^{p+2}+\hat{x}^{p+4})}{(1+u)^{2k-3}(1+r+u)^{\frac{D+4}{2}}}\nonumber\\
		&+\frac{C(\hat{x}^3+\hat{x}^{p}+\hat{x}^{p+2})r}{(1+u)^{2k-3}(1+r+u)^{\frac{D+4}{2}}}.
		\end{align}
		
		Then, we calculate
		\begin{eqnarray}
		\left|\frac{\partial h}{\partial r}(0,r) \right|\leq \frac{C\hat{d}}{(1+r)^k}.
		\end{eqnarray}
		
		Using the similar calculation as the estimate (\ref{delta1}), we obtain
		\begin{align}\label{exp local}
		\int_{0}^{\delta}\left[\frac{1}{2}\left|\frac{\partial\tilde{g}}{\partial r}\right|+\frac{1}{2r}\left|\left(\frac{\hat{R}(\hat{\sigma})}{(D-2)}g-\frac{(D-2)}{2}\tilde{g}\right) \right|+\frac{8\pi g r|V(\tilde{h})|}{(D-2)}\right]\mathrm{d}u\leq C\left(\hat{x}^2 + \hat{x}^{k+1} + \hat{x}^{p+1} \right)\delta.
		\end{align}
		Combining (\ref{f1 local})-(\ref{exp local}) together with (\ref{Fcurl}), we write the estimate for (\ref{Gcurl}) as follows
		\begin{eqnarray}\label{Gcurl local}
		\left|\mathcal{G}(\delta,r)\right|
		&\leq&\frac{\hat{C}_3(\hat{d}+\hat{x}^3+\hat{x}^{k+2}+\hat{x}^p+\hat{x}^{p+2}+\hat{x}^{p+4})(1+\hat{x}^2+\hat{x}^{k+1}+\hat{x}^{p+1}+\hat{x}^{p+3})\delta}{(1+r)^k}\nonumber\\
		&&\times \exp\left[\hat{C}_4(\hat{x}^2+\hat{x}^{k+1}+\hat{x}^{p+1})\delta\right],
		\end{eqnarray}
		for any $\hat{C}_3, \hat{C}_4 >0$ depends on $k$ and dimensions $D$.
		
		From the definition of norm for the space $\hat{X}$, we obtain
		\begin{eqnarray}
		\|\mathcal{F}\|_{\hat{X}}&\leq&\hat{C}_5(\hat{d}+\hat{x}^3+\hat{x}^{k+2}+\hat{x}^p+\hat{x}^{p+2}+\hat{x}^{p+4})(1+\hat{x}^2+\hat{x}^{k+1}+\hat{x}^{p+1}+\hat{x}^{p+3})\delta\nonumber\\
		&&\times\exp\left[\hat{C}_6(\hat{x}^2+\hat{x}^{k+1}+\hat{x}^{p+1})\delta\right]\;.
		\end{eqnarray}
		for any $\hat{C}_5, \hat{C}_6 >0$ depends on $k$ and dimensions $D$.
		
		Let us introduce a function
		\begin{eqnarray}
		\mathcal{L}_1(\hat{x})=\frac{\hat{x}\exp\left[-\hat{C}_6(\hat{x}^2+\hat{x}^{k+1}+\hat{x}^{p+1})\delta\right]}{\hat{C}_5(1+\hat{x}^2+\hat{x}^{k+1}+\hat{x}^{p+1}+\hat{x}^{p+3})\delta}-\left(\hat{x}^3+\hat{x}^{k+2}+\hat{x}^p+\hat{x}^{p+2}+\hat{x}^{p+4}\right),
		\end{eqnarray}
		satisties $\mathcal{L}_1(0)=0$, $\mathcal{L}_1'(0)>0$, and $\mathcal{L}_1(\hat{x})\rightarrow-\infty$ as $\hat{x}\rightarrow\infty$. Furthermore, there exists $\hat{x}_0\in(0,\hat{x}_1)$ such that $\mathcal{L}_1(\hat{x})$ is monotonically increasing on $[0,\hat{x}_0]$. For every $\hat{x}\in(0,\hat{x}_0)$, we obtain that $\|\mathcal{F}\|_{\hat{X}}\leq \hat{x}$ such that the map $\mathcal{F}$ defined in (\ref{Fcurl}) is contained in the closed ball of radius $\hat{x}$ in the space $\hat{X}$ if $\hat{d}<\mathcal{L}_1(\hat{x})$. 
		This is the end of the proof.
	\end{proof}
	\begin{lemma}
	Let $D\geq 4$. There exists $\delta:=\delta(\hat{x})>0$, such that if $u_0<\delta$, the map  $\mathcal{F}$ defined in (\ref{Fcurl}) contracts in $\hat{Y}$.
	\end{lemma}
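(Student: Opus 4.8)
The plan is to run the standard contraction argument on the explicit integral representation (\ref{Theta}) for $\Theta=\mathcal{F}(h_1)-\mathcal{F}(h_2)$: I will estimate the source $\tilde\psi$ of (\ref{psi}) linearly in $\hat y=\|h_1-h_2\|_{\hat Y}$, with a constant depending only on $\hat x$, and then shrink $u_0<\delta$ so that this constant times $\delta$ (together with the harmless exponential weight of (\ref{Theta})) is strictly less than $1$. Throughout I take $h_1,h_2$ in the closed ball of radius $\hat x$ in $\hat X$ furnished by the previous lemma, so that the one-sided bounds (\ref{htilde local})--(\ref{fk local}) and the bound behind (\ref{Gcurl local}) are available for each $h_l$, $l=1,2$.

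First I would derive the difference estimates for the auxiliary quantities. From the $\hat Y$-norm, $|h_1-h_2|\le\hat y\,(1+r+u)^{-(k-1)}$; inserting this into (\ref{htilde}) exactly as in (\ref{htilde local}) gives $|\tilde h_1-\tilde h_2|$ and $|(h_1-\tfrac{D-2}{2}\tilde h_1)-(h_2-\tfrac{D-2}{2}\tilde h_2)|$ bounded by $C\hat y\,(1+u)^{-(2k-D)/2}(1+r+u)^{-(D-2)/2}$. Using $|e^{a}-e^{b}|\le\max\{e^{a},e^{b}\}\,|a-b|$ and the boundedness of the exponents in (\ref{g}), this yields $|g_1-g_2|\le C(\hat x)\,\hat y$ times the decay of the (\ref{estimate g})-type factors, and by $r$-integration the corresponding bound for $|\bar g_1-\bar g_2|$. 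For the potential I would use (\ref{estimasi Potensial}) with the mean value theorem to get $|V(\tilde h_1)-V(\tilde h_2)|\le C\hat x^{p}|\tilde h_1-\tilde h_2|$ and $\big|\tfrac{\partial^2 V(\tilde h_1)}{\partial\tilde h_1^2}-\tfrac{\partial^2 V(\tilde h_2)}{\partial\tilde h_2^2}\big|\le C\hat x^{p-1}|\tilde h_1-\tilde h_2|$. Feeding all of this term by term into the decomposition (\ref{gtilde}) of $\tilde g$ --- including the middle integral $\frac{D-4}{D-2}\hat R(\hat\sigma)r^{D-5}\int_0^r\bar g\,s^{-(D-4)}\mathrm{d}s$, which is present only for $D>4$ --- produces $|\tilde g_1-\tilde g_2|\le C(\hat x)\,\hat y$ times an integrable power of $(1+r+u)$, uniformly on $[0,u_0]$.

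Next I would plug these differences, together with $|\mathcal{F}_l|\le C\hat x(1+r+u)^{-(k-1)}$, $|\mathcal{G}_2|\le C(\hat x)(1+r+u)^{-k}$, $|g_l|\le1$ and (\ref{g-gtilde k})--(\ref{fk local}), into each term on the right-hand side of (\ref{psi}); every term carries exactly one difference factor, hence one power of $\hat y$, and collecting the worst decay gives $|\tilde\psi|\le C(\hat x)\,\hat y\,(1+u)^{-(2k-3)}(1+r+u)^{-[(D-2)k-2]/2}$, the same profile as (\ref{fk local}). The exponential factor in (\ref{Theta}) is bounded, exactly as in (\ref{delta1}), by $\exp[C(\hat x)\delta]$ on $[0,u_0]$. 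Performing the $u$-integral in (\ref{Theta}) and multiplying by $(1+r+u)^{k-1}$ --- note $[(D-2)k-2]/2\ge k-1$ for $D\ge4$, so the spatial weight is absorbed --- gives $\|\Theta\|_{\hat Y}\le C(\hat x)\,\delta\,e^{C(\hat x)\delta}\,\hat y$; choosing $\delta=\delta(\hat x)>0$ with $C(\hat x)\,\delta\,e^{C(\hat x)\delta}<1$ then makes $\mathcal{F}$ a contraction on $\hat Y$ whenever $u_0<\delta$.

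The hard part will be the difference estimate for $\tilde g$: the representation (\ref{gtilde}) entangles $\bar g$, the dimension-dependent middle integral, and the potential integral, and after subtracting the two copies one must check that each piece still decays fast enough both in $(1+r+u)$ and in $(1+u)$ for the $u$-integral of (\ref{Theta}) to converge and for the $(1+r+u)^{k-1}$ weight of the $\hat Y$-norm to be absorbed --- this is precisely where $k>D/2$ and the constancy and positivity of $\hat R(\hat\sigma)$ are used. A secondary check is that no term in (\ref{psi}) secretly depends on $\Theta$ itself: $\mathcal{F}_1$ occurs only multiplied by the small factor $g_1-g_2$, so (\ref{Theta}) is a genuinely explicit bound and nothing beyond the trivial exponential factor (no Gr\"onwall iteration) is needed.
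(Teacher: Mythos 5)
Your overall strategy is the same as the paper's: difference estimates for $\tilde h$, $g$, $\bar g$, $\tilde g$ and $V$, inserted term by term into the source $\tilde\psi$ of (\ref{Theta}), a bounded exponential weight as in (\ref{delta1}), and a Lipschitz constant proportional to $\delta$ that is then made smaller than one; your structural remarks (no hidden dependence on $\Theta$, no Gr\"onwall iteration needed) are also correct. The problem is the decisive quantitative claim that ``collecting the worst decay'' gives $|\tilde\psi|\le C(\hat x)\,\hat y\,(1+u)^{-(2k-3)}(1+r+u)^{-[(D-2)k-2]/2}$, i.e.\ the profile of (\ref{fk local}). That is not true. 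The terms of (\ref{psi}) proportional to $\tilde h_1-\tilde h_2$ but carrying no $\mathcal F_2$ or $\mathcal G_2$ factor, for instance $-\frac{1}{2r}\left[\frac{\hat R(\hat\sigma)}{D-2}g_1-\frac{D-2}{2}\tilde g_1\right]\frac{D-2}{2}\left(\tilde h_1-\tilde h_2\right)$, decay much more slowly: $\tilde h$ is a radial average, so $|\tilde h_1-\tilde h_2|$ decays only like $(1+u)^{-(2k-D)/2}(1+r+u)^{-(D-2)/2}$ no matter how large $k$ is (this is exactly (\ref{estimate h1-h2})), and combined with (\ref{g-gtilde k}) that term is only $O\bigl((1+r+u)^{-(D+2)/2}\bigr)$ in the spatial variable. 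Since $[(D-2)k-2]/2>(D+2)/2$ as soon as $k>(D+4)/(D-2)$ (in particular for every admissible $k$ once $D\ge 6$), your stated bound on $\tilde\psi$ is false, and the subsequent absorption of the $\hat Y$-weight, justified only by ``$[(D-2)k-2]/2\ge k-1$'', breaks down precisely when $k-1>(D+2)/2$.

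The paper does not claim your profile. It proves the weaker uniform bound $(1+u)^{-3(2k-D)/2}(1+r+u)^{-(D-2)}$ for $\tilde\psi$, keeping the additional decay of the individual terms inside the prefactors $\hat\alpha,\hat\beta,\hat\gamma,\hat\sigma$ (these carry the $(1+r)^{-(k-1)}$ and $(1+r)^{-k}$ decay of $\mathcal F_2$ and $\mathcal G_2$ from (\ref{Fcurl local}) and (\ref{Gcurl local}), transported along the characteristic using $r(u)\ge r_1$), and only afterwards passes to the bound $|\Theta(\delta,r)|\le \hat C_7\hat y\,\hat M(\hat x)\,\delta\exp[\hat C_8(\hat x^2+\hat x^{p+1})\delta]\,(1+r)^{-(k-1)}$ and to the Lipschitz constant $\mathcal L_2$. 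To repair your argument you must do this term-by-term bookkeeping: separate the contributions that contain $\mathcal F_2$ or $\mathcal G_2$ (whose $\hat Y$-weight is already built in) from those containing only $\tilde h_1-\tilde h_2$, $g_1-g_2$, $\tilde g_1-\tilde g_2$, and verify for each of the latter that its genuine spatial decay, not the fictitious $[(D-2)k-2]/2$, controls the weight $(1+r+u)^{k-1}$ after integrating along the characteristic. As written, that verification is the missing step, and it is not automatic for all $k>\frac D2$.
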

	\begin{proof}
		Suppose that equation (\ref{Dh higher}) has two different solutions $h_1, h_2 \in \hat{X}$, that are defined in $\mathcal{I}=[0,\delta]$ and contains $[0,u_0]$ such that $\delta\in \mathcal{I}_2$ but $\delta\notin [0,u_0]$. We assume
		\begin{align}
		\max\{\|h_1\|_{\hat{X}},\|h_2\|_{\hat{X}}\}<\hat{x}.
		\end{align}
		
		Let us consider the notion $\|h_1-h_2\|_{\hat{Y}}=\hat{y}$. Using (\ref{htilde}) we get
		\begin{eqnarray}\label{estimate h1-h2}
		|\tilde{h}_1 - \tilde{h}_2|&\leq& r^{-\frac{(D-2)}{2}}\int_{0}^{r}|h_1-h_2|s^{\frac{D-4}{2}}\mathrm{d}s\nonumber\\
		&\leq&r^{-\frac{(D-2)}{2}}\int_{0}^{r}\frac{\|h_1-h_2\|_{\hat{Y}}}{(1+s+u)^{k-1}}s^{\frac{D-4}{2}}\mathrm{d}s\nonumber\\
		&=&\frac{2\hat{y}}{(2k-D)}\frac{1}{(1+u)^{\frac{(2k-D)}{2}}(1+r+u)^{\frac{(D-2)}{2}}}.
		\end{eqnarray}
		We estimate
		\begin{eqnarray}
		|h_1-h_2-(\tilde{h}_1-\tilde{h}_2)|\leq |h_1-h_2|+|\tilde{h}_1-\tilde{h}_2|\leq\frac{C\hat{y}}{(1+u)^{\frac{(2k-D)}{2}}(1+r+u)^{\frac{(D-2)}{2}}},
		\end{eqnarray}
		and
		\begin{eqnarray}
		\left||h_1-\tilde{h}_1|^2-|h_2-\tilde{h}_2|^2\right|&\leq& \left|(h_1-h_2)-(\tilde{h}_1-\tilde{h}_2) \right|\left(|h_1-\tilde{h}_1|+|h_2-\tilde{h}_2|\right)\nonumber\\
		&\leq& \frac{C\hat{x}\hat{y}}{(1+u)^{2k-D}(1+r+u)^{D-2}}.
		\end{eqnarray}
		Thus,
		\begin{eqnarray}\label{g1g2}
		|g_1-g_2|&\leq&\frac{8\pi}{(D-2)} \int_{r}^{\infty}\frac{1}{s}\left||h_1-\tilde{h}_1|^2-|h_2-\tilde{h}_2|^2\right|\mathrm{d}s\nonumber\\
		&\leq& \frac{C\hat{x}\hat{y}}{(1+u)^{(2k-D)}(1+r+u)^{(D-2)}},
		\end{eqnarray}
		such that
		\begin{align}\label{estimate g1g2}
		|\bar{g}_1-\bar{g}_2|\leq\frac{1}{r}\int_{0}^{r}|g_1-g_2|\mathrm{d}s\leq\frac{C\hat{x}\hat{y}}{(1+u)^{2k-3}(1+r+u)},
		\end{align}
		and
		\begin{align}\label{estimate g1g2 2}
		\frac{(D-4)}{(D-2)}\frac{\hat{R}(\hat{\sigma})}{r^{D-3}}\int_{0}^{r}\left|\bar{g}_1-\bar{g}_2\right|s^{D-4}\mathrm{d}s\leq\frac{C\hat{x}\hat{y}}{(1+u)^{2k-3}(1+r+u)}.
		\end{align}
		Using the notion
		\begin{eqnarray}
		\tilde{h}_1^{p+1} - \tilde{h}_2^{p+1}=(\tilde{h}_1-\tilde{h}_2)\int_{0}^{1}\left(t\tilde{h}_1+(1-t)\tilde{h}_2\right)\left|t\tilde{h}_1+(1-t)\tilde{h}_2\right|^{p-1}\mathrm{d}t,
		\end{eqnarray}
		we calculate
		\begin{eqnarray}\label{hp1hp2}
		\left|\tilde{h}_1^{p+1}-\tilde{h}_2^{p+1}\right|&\leq& |\tilde{h}_1-\tilde{h}_2|\left(|\tilde{h}_1|+|\tilde{h}_2|\right)^p\nonumber\\
		&\leq& \frac{2^{2p+1}\hat{x}^p\hat{y}}{(2k-D)^{k+1}(1+u)^{\frac{(2k-D)(k+1)}{2}}(1+r+u)^{\frac{(D-2)(k+1)}{2}}}.
		\end{eqnarray}
		From (\ref{hp1hp2}) we obtain
		\begin{align}\label{estimate hp1hp2}
		\frac{16\pi  }{(D-2)}\frac{1}{r^{D-3}}\int_{0}^{r} g_1s^{D-2}\left|V(\tilde{h}_1)-V(\tilde{h}_2)\right|\mathrm{d}s\leq \frac{C\hat{x}^p\hat{y}}{(1+u)^{k^2-D}(1+r+u)^{D-3}}.
		\end{align}
		Combination of (\ref{estimate g1g2}), (\ref{estimate g1g2 2}), and (\ref{estimate hp1hp2}) yields
		\begin{align}
		\left|\tilde{g}_1-\tilde{g}_2\right|\leq& \frac{\hat{R}(\hat{\sigma})}{(D-2)}\left|\bar{g}_1-\bar{g}_2\right|+\frac{(D-4)}{(D-2)}\frac{\hat{R}(\hat{\sigma})}{r^{D-3}}\int_{0}^{r}\left|\bar{g}_1-\bar{g}_2\right|s^{D-4}\mathrm{d}s\nonumber\\
		&+ \frac{16\pi}{(D-2)}\frac{1}{r^{D-3}}\int_{0}^{r} g_1s^{D-2}\left|V(\tilde{h}_1)-V(\tilde{h}_2)\right|\mathrm{d}s\nonumber\\
		\leq&\frac{C(\hat{x}+\hat{x}^p)\hat{y}}{(1+u)^{2k-3}(1+r+u)}.
		\end{align}
		Then, we estimate
		\begin{eqnarray}
		\left|\frac{\hat{R}(\hat{\sigma})}{(D-2)}(g_1-g_2)-\frac{\hat{R}(\hat{\sigma})}{2}(\bar{g}_1-\bar{g}_2)\right|\leq\frac{C\hat{x}\hat{y}r}{(1+u)^{2k-3}(1+r+u)^{D-2}},
		\end{eqnarray}
		such that
		\begin{align}
		&\frac{1}{2r}\left|\frac{\hat{R}(\hat{\sigma})}{(D-2)}(g_1-g_2)-\frac{(D-2)}{2}(\tilde{g}_1-\tilde{g}_2)\right|\leq\frac{1}{2r}\left[	\left|\frac{\hat{R}(\hat{\sigma})}{(D-2)}(g_1-g_2)-\frac{\hat{R}(\hat{\sigma})}{2}(\bar{g}_1-\bar{g}_2)\right|\right.\nonumber\\
		&\left.+\frac{(D-4)}{2}\frac{\hat{R}(\hat{\sigma})}{r^{D-3}}\int_{0}^{r}\left|\bar{g}_1-\bar{g}_2\right|s^{D-4}\mathrm{d}s+\frac{8\pi}{r^{D-3}}\int_{0}^{r} g_1s^{D-2}\left|V(\tilde{h}_1)-V(\tilde{h}_2)\right|\mathrm{d}s\right]\nonumber\\
		&\leq\frac{C(\hat{x}+\hat{x}^p)\hat{y}}{(1+u)^{2k-3}(1+r+u)^{D-2}}.
		\end{align}
		Combining the above estimates together with (\ref{Gcurl local}), (\ref{Fcurl local}), and (\ref{htilde local}), we write
		\begin{equation}
		|\tilde{\psi}|\leq\frac{C\hat{y}\left[\hat{\alpha}(\hat{x})+\hat{\beta}(\hat{x})+\hat{\gamma}(\hat{x})+\hat{\sigma}(\hat{x})+\hat{x}^2+\hat{x}^{p-1}+\hat{x}^{p+1}+\hat{x}^{p+3}\right]}{(1+u)^{\frac{3(2k-D)}{2}}(1+r+u)^{D-2}}\;,
		\end{equation}
		where
		\begin{align}
		\hat{\alpha}(\hat{x})=&C(\hat{x}+\hat{x}^p)\left(\hat{d}+\hat{x}^3+\hat{x}^{k+2}+\hat{x}^p+\hat{x}^{p+2}+\hat{x}^{p+4}\right)(1+\hat{x}^2+\hat{x}^{k+1}+\hat{x}^{p+1}+\hat{x}^{p+3})\delta\nonumber\\
		&\times\exp\left[C(\hat{x}^2+\hat{x}^{k+1}+\hat{x}^{p+1})\delta\right],\\
		\hat{\beta}(\hat{x})=&C(\hat{x}+\hat{x}^p)(\hat{d}+\hat{x}^3+\hat{x}^p+\hat{x}^{p+2})\delta\exp\left[C(\hat{x}^2+\hat{x}^{p+1})\delta\right],\\
		\hat{\gamma}(\hat{x})=&C \hat{x}^p(\hat{d}+\hat{x}^3+\hat{x}^p+\hat{x}^{p+2})\delta\exp[C(\hat{x}^2+\hat{x}^{p+1})\delta],\\
		\hat{\sigma}(\hat{x})=&C\hat{x}^{p+2}(\hat{d}+\hat{x}^3+\hat{x}^p+\hat{x}^{p+2})\delta\exp[C(\hat{x}^2+\hat{x}^{p+1})\delta].
		\end{align}
		
		We write the estimate for the exponential term of (\ref{Theta}) from $u\in[0,u_0]$ up to $u'=\delta>u_0$ such that
		\begin{eqnarray}
		\int_{0}^{\delta} \left[\frac{1}{2r}\left|\frac{\hat{R}(\hat{\sigma})}{(D-2)}g_1-\frac{(D-2)}{2}\tilde{g}_1\right|+\frac{8\pi r g_2|V(\tilde{h}_1)|}{(D-2)}\right]\mathrm{d}u'\leq C(\hat{x}^2+\hat{x}^{p+1})\delta.
		\end{eqnarray}
		Therefore, the estimate for (\ref{Theta}) yields
		\begin{eqnarray}
		\left|\Theta(\delta,r)\right|\leq \hat{C}_7\frac{\hat{y} \hat{M}(\hat{x})\delta\exp[\hat{C}_8(\hat{x}^2+\hat{x}^{p+1})\delta]}{(1+r)^{k-1}},
		\end{eqnarray}
		for any $\hat{C}_7, \hat{C}_8 >0$ depends on $k$ and dimensions $D$,	and we have denoted
		\begin{eqnarray}
		\hat{M}(\hat{x}) = \hat{\alpha}(\hat{x})+\hat{\beta}(\hat{x})+\hat{\gamma}(\hat{x})+\hat{\sigma}(\hat{x})+\hat{x}^2+\hat{x}^{p-1}+\hat{x}^{p+1}+\hat{x}^{p+3}.
		\end{eqnarray}
		
		We represent the Lipschitz condition for (\ref{Theta}) in the space $\hat{Y}$ as follows
		\begin{equation}
		\|\Theta\|_{\hat{Y}} \leq\mathcal{L}_2\hat{y}
		\end{equation}
		with
		\begin{equation}
		\mathcal{L}_2=\hat{C}_7M(\hat{x})\delta\exp[\hat{C}_8(\hat{x}^2+\hat{x}^{p+1})\delta].\:
		\end{equation}
		We deduce that $\mathcal{L}_2(0)=0$ and $\mathcal{L}_2'(0)>0$. 
		Also, $\mathcal{L}_2$ is monotonically increasing on $\hat{x}_2\in \mathbb{R}^+$. There exists $\hat{x}_3\in \mathbb{R}^+$ such that $\mathcal{L}_2(x)<1$ for all $\hat{x}$ in $(0,\hat{x}_3]$. 
		Hence, the mapping $h\mapsto\mathcal{F}(h)$ contracts in $\hat{Y}$ for $\|h\|_{\hat{X}}\leq \hat{x}_3$. The proof is finished.
	\end{proof}
	Since $\hat{Y}$ containing $\hat{X}$, and $h\mapsto\mathcal{F}(h)$ contracts in $\hat{Y}$, then $h\mapsto\mathcal{F}(h)$ also contracts in $\hat{X}$. As a consequance, there exists a unique fixed point $h\in \hat{X}$ such that $\mathcal{F}(h)=h$. From (\ref{Dh higher}), we obtain that $\left|\frac{\partial h}{\partial u}\right|$ bounded, namely
	\begin{eqnarray}\label{dhdu hat}
	\left|\frac{\partial h}{\partial u}\right|&\leq& \frac{\tilde{g}}{2}\left|\frac{\partial h}{\partial r}\right| + \frac{1}{2r}\left|\frac{\hat{R}(\hat{\sigma})}{(D-2)}g-\frac{(D-2)}{2}\tilde{g}\right|\left|h\right|+\frac{(D-2)}{4r}\left|\frac{\hat{R}(\hat{\sigma})}{(D-2)}g-\frac{(D-2)}{2}\tilde{g}\right|\left|\tilde{h}\right|\nonumber\\
	&&+\frac{8\pi gr}{(D-2)}\left|h\right||V(\tilde{h})|+4\pi gr|\tilde{h}||V(\tilde{h})|+\frac{gr}{2}\left|\frac{\partial V(\tilde{h})}{\partial \tilde{h}}\right|\nonumber\\
	&\leq&\frac{C(\hat{K}_1(x)+\hat{K}_2(x)+\hat{K}_3+\hat{x}^3+\hat{x}^p+\hat{x}^{p+2})}{(1+r)^{\frac{3(D-2)}{2}-1}},
	\end{eqnarray}
	in $[0,u_0]\times[0,\infty)$
	with
	\begin{align}
	\hat{K}_1(\hat{x})=&C(\hat{d}+\hat{x}^3+\hat{x}^{k+2}+\hat{x}^p+\hat{x}^{p+2}+\hat{x}^{p+4})(1+\hat{x}^2+\hat{x}^{k+1}+\hat{x}^{p+1}+\hat{x}^{p+3})\delta\nonumber\\
	&\times\exp\left[C(\hat{x}^2+\hat{x}^{k+1}+\hat{x}^{p+1})\delta\right]\\
	\hat{K}_2(\hat{x})=&C(\hat{x}^2+\hat{x}^{p+1})(\hat{d}+\hat{x}^3+\hat{x}^p+\hat{x}^{p+2})\delta\exp\left[C(\hat{x}^2+\hat{x}^{p+1})\delta\right]\\
	\hat{K}_3(\hat{x})=&C\hat{x}^{p+1}(\hat{d}+\hat{x}^3+\hat{x}^p+\hat{x}^{p+2})\delta\exp\left[C(\hat{x}^2+\hat{x}^{p+1})\delta\right].
	\end{align}
	Estimate (\ref{dhdu hat}) ensures that $h$ is a classical solution of the equation (\ref{Dh higher}).  
	
	Finally, we state the theorem of local existence as follows
	\begin{theorem}
		Setting $D\geq 4$. Let $\hat{X}$ and $\hat{Y}$ be the function spaces defined by (\ref{space Xhat}) and (\ref{space Yhat}) respectively. For a given an initial data $h(0,r)\in C^1[0,\infty)$ and the positive constant $(D-2)$-spatial Ricci scalar of compact manifold $\hat{R}(\hat{\sigma})$ such that $h(0,r)=O(r^{-{(k-1)}})$ and $\frac{\partial h}{\partial r}(0,r)=O(r^{-k})$ as $r\rightarrow \infty$, there exist a $u_0>0$ and a classical solution $h(u,r)\in C^1 ([0,u_0]\times [0,\infty))$ of the equation (\ref{Dh higher}) for $k>\frac{D}{2}$ and $p\in[k,\infty)$, such that $h(u,r)=O(r^{-{(k-1)}})$ and $\frac{\partial h}{\partial r}(u,r)=O(r^{-k})$ as $r\rightarrow \infty$ at each $u\in[0,u_0]$.
	\end{theorem}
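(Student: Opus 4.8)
The plan is to realize the solution of (\ref{Dh higher}) as the unique fixed point of the map $\mathcal{F}$ defined in (\ref{Fcurl}), obtained by combining Lemmas~1 and~2 with the Banach fixed point theorem, and then to promote that fixed point to a genuine $C^1$ solution by means of the bound (\ref{dhdu hat}). First I would read off the data: the hypotheses $h(0,r)=O(r^{-(k-1)})$ and $\frac{\partial h}{\partial r}(0,r)=O(r^{-k})$ as $r\to\infty$, together with $h(0,\cdot)\in C^1[0,\infty)$, are exactly the statement that $h(0,\cdot)\in\hat{X}_0$, so $\hat{d}:=\|h(0,\cdot)\|_{\hat{X}_0}<\infty$. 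All the dimensional and exponent restrictions $D\ge 4$, $k>\tfrac{D}{2}$, $p\in[k,\infty)$ with $k,p$ positive integers, and the positivity of $\hat{R}(\hat{\sigma})$, are built into the space $\hat{X}$ and into the hypotheses of the two lemmas, hence are inherited for free.

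Next I would fix the radius and the time. Fix any $\hat{x}>0$. In Lemma~1 the function $\mathcal{L}_1(\hat{x})$ has a leading term of order $\hat{x}/\delta$ while its remaining terms are $\delta$-independent, so $\mathcal{L}_1(\hat{x})\to+\infty$ as $\delta\to0^+$; thus there is $\delta_1=\delta_1(\hat{x},\hat{d})>0$ with $\mathcal{L}_1(\hat{x})>\hat{d}$, and by Lemma~1 the map $\mathcal{F}$ then carries the closed ball $\overline{B}_{\hat{x}}=\{h\in\hat{X}:\|h\|_{\hat{X}}\le\hat{x}\}$ into itself provided $u_0<\delta_1$. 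Likewise, the Lipschitz constant $\mathcal{L}_2$ of Lemma~2 is proportional to $\delta$ up to a factor bounded as $\delta\to0$, hence $\mathcal{L}_2\to0$ as $\delta\to0^+$, so there is $\delta_2=\delta_2(\hat{x})>0$ with $\mathcal{L}_2<1$, and $\mathcal{F}$ contracts in the $\hat{Y}$-metric provided $u_0<\delta_2$. I would then take $u_0<\min\{\delta_1,\delta_2\}$ once and for all.

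The central step is the fixed point argument itself. Let $E=\overline{B}_{\hat{x}}\cap\{h:h(0,r)=h_0(r)\}$, and equip $E$ with the distance $d(h_1,h_2)=\|h_1-h_2\|_{\hat{Y}}$ coming from the larger space $\hat{Y}\supset\hat{X}$. By the two lemmas, $\mathcal{F}:E\to E$ and $\mathcal{F}$ is a $d$-contraction on $E$. The step I expect to be the main obstacle is verifying that $(E,d)$ is a \emph{complete} metric space, so that Banach's theorem applies: given a $d$-Cauchy sequence $(h_n)\subset E$, it converges in $\hat{Y}$ to some $h$; the pointwise bound $|h(u,r)|\le\hat{x}(1+r+u)^{-(k-1)}$ and the initial condition $h(0,\cdot)=h_0$ pass to the limit immediately because the $\hat{Y}$-norm dominates weighted pointwise values, but one still has to show that $h\in C^1$ with $\left|\frac{\partial h}{\partial r}(u,r)\right|\le\hat{x}(1+r+u)^{-k}$. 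Here I would use the uniform $\hat{X}$-bound on $(h_n)$ together with the structure of $\mathcal{F}$: after one application of $\mathcal{F}$ the iterates satisfy transport equations along the characteristics (\ref{PDB})--(\ref{kondisi awal r}) for $h$ and, via (\ref{eq.G})--(\ref{Gcurl}), for $\partial_r h$, with right-hand sides that are equibounded on compact sets; an Arzel\`a--Ascoli/diagonal extraction then yields a subsequence converging in $C^1$ on every compact subset of $[0,u_0]\times[0,\infty)$, whose limit must coincide with the $\hat{Y}$-limit $h$, giving $h\in C^1$ and, by lower semicontinuity of the weighted sup-norms, $\|h\|_{\hat{X}}\le\hat{x}$. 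This compactness mechanism is the one used in the spherically symmetric four-dimensional treatments \cite{Chris1,Chae,Prisma}, and adapting its bookkeeping to the additional terms of (\ref{gtilde}) present only in $D\ge 5$ is the one genuinely new point.

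With $(E,d)$ complete, Banach's fixed point theorem produces a unique $h\in E\subset\hat{X}$ with $\mathcal{F}(h)=h$. Since (\ref{Fcurl}) is precisely the integral form of the evolution equation (\ref{Dh higher}) along the characteristics (\ref{PDB})--(\ref{kondisi awal r}), and since $\mathcal{G}(h)=\frac{\partial\mathcal{F}}{\partial r}(h)=\frac{\partial h}{\partial r}$ by construction, this $h$ solves (\ref{Dh higher}) with $h(0,r)=h_0(r)$. Finally I would invoke (\ref{dhdu hat}): it shows $\frac{\partial h}{\partial u}$ is bounded and continuous on $[0,u_0]\times[0,\infty)$, so $h\in C^1([0,u_0]\times[0,\infty))$, and membership $h\in\hat{X}$ gives at once $|h(u,r)|\le\hat{x}(1+r+u)^{-(k-1)}=O(r^{-(k-1)})$ and $\left|\frac{\partial h}{\partial r}(u,r)\right|\le\hat{x}(1+r+u)^{-k}=O(r^{-k})$ as $r\to\infty$, uniformly for $u\in[0,u_0]$. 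This is exactly the assertion of the theorem.
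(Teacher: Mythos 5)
Your proposal follows essentially the same route as the paper: Lemma~1 gives invariance of the closed ball of radius $\hat{x}$ in $\hat{X}$ for $u_0$ small, Lemma~2 gives the contraction in the $\hat{Y}$-norm, the Banach fixed point theorem yields the unique fixed point $h=\mathcal{F}(h)$, and the bound (\ref{dhdu hat}) on $\left|\frac{\partial h}{\partial u}\right|$ upgrades it to a classical $C^1$ solution with the stated decay, which is read off from $\|h\|_{\hat{X}}\le\hat{x}$. The only difference is that you spell out the completeness of the ball of $\hat{X}$ under the $\hat{Y}$-metric (via weighted pointwise limits and an Arzel\`a--Ascoli argument for $\partial_r h$), a point the paper passes over with the looser assertion that contraction in $\hat{Y}$ implies contraction in $\hat{X}$; your version is the standard and more careful way to close that step, so this is a refinement rather than a different approach.
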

	\subsection{Global Existence}
	This section is devoted to prove the existence of global solution of the equation (\ref{Dh higher}). We employ the similar method with the proof of local existence by taking $u_0\rightarrow\infty$.
	
	Given the spaces of function
	\begin{eqnarray}
	X & = & \{h\in C^1([0,\infty) \times [0,\infty) )\; | \;\|h\|_X < \infty\}\:, \label{space X}\\
	X_0 & = & \{h\in C^1([0,\infty) )\; | \;\|h\|_{X_0} < \infty\}\:, \\
	Y & = & \{h\in C^1([0,\infty) \times [0,\infty) )\; | \; h(0,r) = h_0(r), \|h\|_Y < \infty\}\label{space Y}\:,
	\end{eqnarray}
	in which $Y$ containing $X$, equipped by the norm
	\begin{eqnarray}
	\|h\|_X & = & \sup_{u\geq 0} \sup_{r\geq 0}\left\{ (1+r+u)^{k-1}|h(u,r)| + (1+r+u)^{k}\left|\frac{\partial h}{\partial r}(u,r)\right|  \right\}\:, \label{hx}\\
	\|h\|_{X_0} & = & \sup_{r\geq 0}\left\{ (1+r)^{k-1}|h(r)| + (1+r)^{k}\left|\frac{\partial h}{\partial r}(r)\right|  \right\}\:, \\
	\|h\|_Y & = & \sup_{u\geq 0} \sup_{r\geq 0}\left\{ (1+r+u)^{k-1}|h(u,r)|\right\}\:\label{norm Y},
	\end{eqnarray}
	with $k>\frac{D}{2}$, where $k\in \mathbb{R}^+$, and $D\geq 4$. Here, the space function $X, X_0,$ and $Y$ are the Banach space. We also denote $\|h\|_X=x$, $\|h_0\|_{X_0}=d$, and $\|h_1-h_2\|_Y=y$.
	
	\begin{lemma} \label{Lemma 1}
		Let us denote $\|h\|_X=x$ and $\|h(0,.)\|_{X_0}=d$. For $D\geq 4$, setting $k>\frac{D}{2}$ and $p\in[k,\infty)$. Given the initial data $h(0,r)\in C^1[0,\infty)$ and the positive constant $(D-2)$-spatial Ricci scalar of compact manifold $\hat{R}(\hat{\sigma})$ such that $h(0,r)=O(r^{-{(k-1)}})$ and $\frac{\partial h}{\partial r}(0,r)=O(r^{-k})$ as $r\rightarrow \infty$.  Then, the solution of (\ref{Dh higher}) has the decay properties:
		\begin{align} \label{decay1}
		|h(u,r)|\leq \frac{C(d+x^3+x^p+x^{p+2})\exp\left[C(x^2+x^{p+1})\right]}{(1+r+u)^{k-1}},
		\end{align}
		and
		\begin{align} \label{decay2}
		\left|\frac{\partial h}{\partial r}(u,r) \right|\leq& \frac{C(d+x^3+x^{k+2}+x^p+x^{p+2}+x^{p+4})(1+x^2+x^{k+1}+x^{p+1}+x^{p+3})}{(1+r+u)^k}\nonumber\\
		&\times \exp\left[C(x^2+x^{k+1}+x^{p+1})\right].
		\end{align}
		for any $C:=C(k,D)>0$ depends on $k$ and dimensions $D$.
	\end{lemma}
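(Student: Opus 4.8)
The plan is to re-run the proof of the two preceding local-existence lemmas, but with the finite horizon $u_0$ replaced by $+\infty$: one regards $h$ as the fixed point $h=\mathcal{F}(h)$ of the map (\ref{Fcurl}) with $\|h\|_X=x$, and derives (\ref{decay1})--(\ref{decay2}) by estimating the right-hand sides of the integral representations (\ref{Fcurl}) for $h$ and (\ref{Gcurl}) for $\partial h/\partial r$ along the backward characteristics. The only genuinely new ingredient compared with the local case is that the $u$-integrals are now improper, over $[0,\infty)$; they converge because every integrand produced by the local estimates carries a weight $(1+u)^{-(2k-3)}$ (cf. (\ref{fk local})), $(1+u)^{-(2k-D)}$ (cf. (\ref{f1 local})), or faster, while $k>\frac{D}{2}\geq 2$ with $k$ a positive integer forces $k\geq 3$, so $2k-3\geq 3>1$; in the borderline case $2k-D=1$ one extra power of decay is harvested from the spatial weight via $1+r+u\geq 1+u$, so the effective $u$-exponent always exceeds $1$. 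Hence wherever the local proof produced a factor $\delta$ --- as in (\ref{delta1}), (\ref{Fcurl local}), (\ref{exp local}), (\ref{Gcurl local}) --- the global estimate produces instead a finite constant $\int_0^\infty(1+u)^{-q}\,\mathrm{d}u<\infty$ depending only on $k$ and $D$, and the exponential factors lose their $\delta$.

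First I would record the pointwise auxiliary bounds, all obtained at fixed $u$ in the local proof and hence valid verbatim on $[0,\infty)\times[0,\infty)$: (\ref{htilde local}) on $\tilde{h}$ and (\ref{h-htilde}) on $h-\frac{(D-2)}{2}\tilde{h}$; the lower bound (\ref{estimate g}) on $g$ and (\ref{g-gbar}) on $g-\bar{g}$; (\ref{g-gtilde k}) on $\frac{\hat{R}(\hat{\sigma})}{(D-2)}g-\frac{(D-2)}{2}\tilde{g}$; (\ref{fk local}) on $f$ of (\ref{f prisma}); and, for the derivative, (\ref{gtilde higher})--(\ref{B2}) together with (\ref{f1 local}) on $f_1$ of (\ref{f1}). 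From (\ref{g}), (\ref{gtilde}), (\ref{estimate g}) and (\ref{integral shbarg}) one also has the two-sided bound $0<c_0\leq\tilde{g}\leq C$ for $x$ small (the lower bound from $g\geq g_{\min}>0$, positivity of $\hat{R}(\hat{\sigma})$ and $\bar{g}$, and the smallness of the potential contribution; the upper bound from (\ref{g-gtilde k}) with $g\leq 1$). This is what enables the conversion of weights along a characteristic: from (\ref{PDB}), (\ref{kondisi awal r}) one has $\chi(u;r_0)=r_1+\frac12\int_u^{u_1}\tilde{g}\,\mathrm{d}u'$, so $\chi(u;r_0)+u\geq r_1+\frac{c_0}{2}u_1\geq c\,(r_1+u_1)$ (with $c_0\leq 2$, $c=\min\{1,c_0/2\}$), and therefore $1+\chi(u;r_0)+u\geq c\,(1+r_1+u_1)$ for all $u\in[0,u_1]$; in particular $1+r_0\geq c\,(1+r_1+u_1)$.

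For (\ref{decay1}): in (\ref{Fcurl}) I would bound the exponential factor by $\exp[C(x^2+x^{p+1})]$ using (\ref{g-gtilde k}), (\ref{estimasi Potensial}), (\ref{htilde local}) and $\int_0^\infty(1+u)^{-(2k-3)}\,\mathrm{d}u<\infty$; bound $|h(0,r_0)|\leq Cd(1+r_0)^{-(k-1)}$ (this is (\ref{h local})) and convert it to $Cd(1+r_1+u_1)^{-(k-1)}$ by the characteristic estimate; and bound the source term $\int_0^{u_1}[f]_\chi\,\mathrm{d}u$ (weighted by the bounded exponential kernel) with (\ref{fk local}), writing its spatial weight as $(1+\chi+u)^{-\frac{(D-2)k-2}{2}}\leq C(1+r_1+u_1)^{-(k-1)}(1+u)^{-\frac{(D-4)k}{2}}$ and absorbing the leftover $(1+u)^{-(2k-3)-\frac{(D-4)k}{2}}$ into a finite constant. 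Adding the two contributions gives (\ref{decay1}). For (\ref{decay2}) I would argue in the same way from (\ref{Gcurl}): the exponential is now bounded by $\exp[C(x^2+x^{k+1}+x^{p+1})]$ because its integrand also contains $\frac{1}{2}\frac{\partial\tilde{g}}{\partial r}$, controlled by (\ref{gtilde higher}); $|f_1|$ is controlled by (\ref{f1 local}), where the term proportional to $\mathcal{F}=h$ is estimated using the already-proved (\ref{decay1}), so that $|f_1|_\chi\leq C(1+u)^{-(2k-D)}(1+\chi+u)^{-(k+2)}$ plus faster-decaying terms; splitting $(1+\chi+u)^{-(k+2)}\leq C(1+r_1+u_1)^{-k}(1+u)^{-2}$ leaves the convergent integral $\int_0^\infty(1+u)^{-(2k-D+2)}\,\mathrm{d}u$; and $\left|\frac{\partial h}{\partial r}(0,r_0)\right|\leq Cd(1+r_0)^{-k}$ converts to $Cd(1+r_1+u_1)^{-k}$. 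Collecting all contributions, with the polynomial dependence on $x$ tracked exactly as in (\ref{Gcurl local}), yields (\ref{decay2}).

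The step I expect to be the main obstacle is the globalization itself. In the local lemmas the weight $(1+r+u)$ was interchangeable with $(1+r)$ up to $\delta$-dependent constants, so the characteristics played no visible role; globally one must genuinely follow the incoming null rays and (i) verify that every $u$-integral appearing in (\ref{Fcurl})--(\ref{Gcurl}) and in the exponents converges --- which is where the hypothesis $k>\frac{D}{2}$ enters, and where in the cases $2k-D=1$ one must trade a unit of spatial decay for temporal decay --- and (ii) establish the uniform two-sided bound on $\tilde{g}$ that underlies the weight conversion $1+r_0\geq c(1+r_1+u_1)$, which relies on positivity of $\hat{R}(\hat{\sigma})$ and on $x$ lying in the smallness range fixed by the radius of the fixed-point ball. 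Once these two points are secured, everything else is a rerun of the local computation with every occurrence of $\delta$ deleted.
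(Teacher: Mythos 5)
Your proposal is correct and follows essentially the same route as the paper: the paper also estimates $h=\mathcal{F}(h)$ and $\partial h/\partial r=\mathcal{G}$ along the backward characteristics, uses a positive lower bound $\kappa(x)$ on $\tilde g$ to get the weight conversion $1+r(u)+u\geq\frac{\kappa}{2}(1+r_1+u_1)$ (your $c_0$, $c$), and replaces the local factors $\delta$ by convergent improper $u$-integrals via the integral formula built on $k>\frac{D}{2}$, exactly as you describe. The only cosmetic difference is that you invoke a two-sided bound on $\tilde g$ and an explicit treatment of the borderline case $2k-D=1$, whereas the paper only needs the lower bound and packages the exponent bookkeeping into its formula (\ref{integral}).
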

	\begin{proof}	
		First, we calculate
		\begin{eqnarray}\label{estimate htilde}
		|\tilde{h}|\leq \frac{2x}{(2k-D)}\frac{1}{(1+u)^{\frac{(2k-D)}{2}}(1+r+u)^{\frac{(D-2)}{2}}}.
		\end{eqnarray}
		As previously, we obtain the estimate for (\ref{f prisma}) as follows (see Appendix \ref{Appendix1})
		\begin{eqnarray}\label{fk}
		|f|\leq \frac{C(x^3+x^p+x^{p+2})}{(1+u)^{\frac{(2k-D)k}{2}}(1+r+u)^{\frac{(D-2)k-2}{2}}}.
		\end{eqnarray}
		
		From (\ref{gtilde}), we get
		\begin{eqnarray}
		|\tilde{g}(u,0)|\geq\frac{2^4\pi x^2\hat{R}(\hat{\sigma})}{(D-3)(D-2)^3}+\frac{2^{p+7}\pi K_0x^{p+1}}{(D-2)\left[(D-2)k-D\right](2k-D)^{k+1}}.
		\end{eqnarray}
		Let us denote $x_1$ as the positive solution of
		\begin{eqnarray}
		\frac{2^4\pi x^2\hat{R}(\hat{\sigma})}{(D-3)(D-2)^3}+\frac{2^{p+7}\pi K_0x^{p+1}}{(D-2)\left[(D-2)k-D\right](2k-D)^{k+1}}=0.
		\end{eqnarray}
		Then, we define a new function
		\begin{eqnarray}
		\kappa:=\kappa(x)=\frac{2^4\pi x^2\hat{R}(\hat{\sigma})}{(D-3)(D-2)^3}+\frac{2^{p+7}\pi K_0x^{p+1}}{(D-2)\left[(D-2)k-D\right](2k-D)^{k+1}}.
		\end{eqnarray}
		for all $x\in[0,x_1)$. Using the definition of characteristics we have
		\begin{eqnarray}
		r(u)=r_1 + \frac{1}{2}\int_{u}^{u_1}\tilde{g}(u',r(u'))\mathrm{d}u'\geq r_1 +\frac{1}{2}\kappa(u_1-u),
		\end{eqnarray}
		and
		\begin{eqnarray}\label{ru}
		1+r(u)+u\geq 1 + \frac{u}{2} + r_1 + \frac{\kappa}{2}(u_1 - u)\geq \kappa\left(1+\frac{u_1}{2}+r_1\right)\geq\frac{\kappa}{2}(1+r_1+u_1).
		\end{eqnarray}
		Thus, we obtain
		\begin{eqnarray}\label{h0r0k}
		|h(0,r_0)|\leq \frac{\|h(0,)\|_{X_0}}{(1+r_0)^{k-1}}\leq\frac{2^{k-1}d}{\kappa^{k-1}(1+r_1+u_1)^{k-1}}.
		\end{eqnarray}
		
		Now, we introduce the integral formula
		\begin{eqnarray}\label{integral}
		\int_{0}^{u_1}\left[\frac{r^s}{(1+u)^t(1+r+u)^q}\right]_\chi\mathrm{d}u&\leq& \int_{0}^{u_1}\left[\frac{1}{(1+u)^t(1+r+u)^{q-s}}\right]_\chi\mathrm{d}u\nonumber\\
		&\leq& \frac{1}{\kappa^m(1+r_1+u_1)^m}\int_{0}^{\infty}\frac{\mathrm{d}u}{(1+u)^{q-s+t-m}}\nonumber\\
		&=&\frac{2^m}{(q-s+t-m-1)\kappa^m(1+r_1+u_1)^m} 
		\end{eqnarray}
		where $q-s+t-m>1$, with $q, s, t, m\in \mathbb{R}$.
		
		Setting $q=D-2$, $s=D-4$, $t=2k-3$, and $m=0$, we obtain
		\begin{align}
		\int_{0}^{u_1}\left|\frac{1}{2r}\left[\frac{\hat{R}(\hat{\sigma})}{(D-2)}g-\frac{(D-2)}{2}\tilde{g}\right]\right|_\chi\mathrm{d}u\leq& C (x^2 + x^{p+1})\int_{0}^{u_1}\left[\frac{r^{D-4}}{(1+u)^{2k-3}(1+r+u)^{D-2}}\right]_\chi\mathrm{d}u\nonumber\\
		\leq& C(x^2+x^{p+1})\int_{0}^{\infty}\frac{1}{(1+u)^{2k-1}}\mathrm{d}u\leq C(x^2+x^{p+1}).
		\end{align}
		On the other hand, setting $q=\frac{(D-2)(k+1)-2}{2}$, $s=0$, $t=\frac{(2k-D)(k+1)}{2}$, and $m=0$, yields
		\begin{align}
		\int_{0}^{u_1} \left|\frac{8\pi g rV(\tilde{h})}{(D-2)}\right|_\chi~\mathrm{d}u\leq&
		Cx^{p+1}\int_{0}^{u_1}\left[\frac{1}{(1+u)^{\frac{(2k-D)(k+1)}{2}}(1+r+u)^{\frac{(D-2)(k+1)-2}{2}}}\right]_\chi\mathrm{d}u\nonumber\\
		\leq& Cx^{p+1}\int_{0}^{\infty}\frac{1}{(1+u)^{4(k-1)-1}}\mathrm{d}u
		\leq Cx^{p+1}.
		\end{align}
		Hence, we estimate the exponential term of (\ref{Fcurl}) as follows
		\begin{align}\label{expk}
		\int_{0}^{u_1} \left|\frac{1}{2r}\left[\frac{\hat{R}(\hat{\sigma})}{(D-2)}g-\frac{(D-2)}{2}\tilde{g}\right]+\frac{8\pi g rV(\tilde{h})}{(D-2)}\right|_\chi\mathrm{d}u\leq C(x^2+x^{p+1}).
		\end{align}
		Combining (\ref{fk}), (\ref{h0r0k}), and (\ref{expk}), we obtain the estimate for (\ref{Fcurl}) as follows
		\begin{eqnarray}\label{estimate Fcurl}
		|\mathcal{F}(u_1,r_1)|\leq\frac{C(d+x^3+x^p+x^{p+2})\exp\left[C(x^2+x^{p+1})\right]}{\kappa^{k-1}(1+r_1+u_1)^{k-1}}.
		\end{eqnarray}
		
		As previously, we calculate (see Appendix \ref{Appendix2})
		\begin{align}\label{f1 higher}
		|f_1|\leq&\frac{C(x^2+x^{k+1}+x^{p+1}+x^{p+3})}{(1+u)^{2k-D}(1+r+u)^3}|\mathcal{F}|+\frac{C(x^3+x^{k+2}+x^{p+2}+x^{p+4})}{(1+u)^{2k-3}(1+r+u)^{\frac{D+4}{2}}}\nonumber\\
		&+\frac{C(x^3+x^{p}+x^{p+2})r}{(1+u)^{2k-3}(1+r+u)^{\frac{D+4}{2}}}.
		\end{align}
		Then, from (\ref{ru}) we obtain
		\begin{eqnarray}\label{h0r02k}
		\left|\frac{\partial h}{\partial r}(0,r_0) \right|\leq \frac{\|h_0\|_{X_0}}{(1+r_0)^k}\leq\frac{d}{(1+r_1+\frac{1}{2}\kappa u_1)^k}\leq \frac{2^kd}{\kappa^k(1+r_1+u_1)^k}.
		\end{eqnarray}
		Again, we use the integral formula (\ref{integral}), to obtain
		\begin{align}\label{exp2}
		\int_{0}^{u_1}\left[\frac{1}{2}\left|\frac{\partial\tilde{g}}{\partial r}\right|+\frac{1}{2r}\left|\left(\frac{\hat{R}(\hat{\sigma})}{(D-2)}g-\frac{(D-2)}{2}\tilde{g}\right) \right|+\frac{8\pi g r|V(\tilde{h})|}{(D-2)}\right]_\chi\mathrm{d}u\leq C\left(x^2 + x^{k+1} + x^{p+1} \right).
		\end{align}
		Combining (\ref{f1 higher})-(\ref{exp2}) together with (\ref{Fcurl}), we write the estimate for (\ref{Gcurl}) as follows
		\begin{eqnarray}\label{estimate Gcurl}
		\left|\mathcal{G}(u_1,r_1)\right|
		&\leq&\frac{C(d+x^3+x^{k+2}+x^p+x^{p+2}+x^{p+4})(1+x^2+x^{k+1}+x^{p+1}+x^{p+3})}{\kappa^k(1+r_1+u_1)^k}\nonumber\\
		&&\times \exp\left[C(x^2+x^{k+1}+x^{p+1})\right].
		\end{eqnarray}
		The proof is finished.
	\end{proof}

	\begin{lemma}\label{Lemma Banach}
	Let us denote $B(0,x)=\left\{f\in X| \|f\|_X\leq x \right\}$ as the close ball with radius $x$ in $X$. For a suitable $x$ and $d=d(x)$, the mapping $\mathcal{F}(.)$ satisfies the arguments as follows:
	\begin{enumerate}
		\item $\mathcal{F}: B(0,x)\rightarrow B(0,x)$, and
		\item there exists $\tilde{\Lambda}=\tilde{\Lambda}(x)\in[0,1)$, such that
		\begin{equation}
		\|\mathcal{F}(h_1)-\mathcal{F}(h_2)\|_Y \leq \tilde{\Lambda}\|h_1-h_2\|_Y.
		\end{equation}
		Thus, $\mathcal{F}$ contracts in $Y$.
	\end{enumerate}
	\end{lemma}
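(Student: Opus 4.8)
The plan is to regard Lemma~\ref{Lemma Banach} as the global counterpart of the two local existence lemmas: we rerun the contraction argument with $u_{0}$ replaced by $+\infty$, the point being that the integrals along the characteristics now converge on all of $[0,\infty)$ precisely because $k>\tfrac{D}{2}$, so that no smallness of the time interval is available and all smallness must instead be extracted from $x$ (and from $d=d(x)$). For item~1 I would simply read off the decay estimates \eqref{decay1}--\eqref{decay2} of Lemma~\ref{Lemma 1}: taking the supremum that defines $\|\cdot\|_{X}$, they give, for $h\in B(0,x)$,
\[
\|\mathcal{F}(h)\|_{X}\le C\bigl(d+x^{3}+x^{k+2}+x^{p}+x^{p+2}+x^{p+4}\bigr)\bigl(1+x^{2}+x^{k+1}+x^{p+1}+x^{p+3}\bigr)\exp\!\bigl[C(x^{2}+x^{k+1}+x^{p+1})\bigr],
\]
with $C=C(k,D)$.

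I would then introduce
\[
\mathcal{L}(x)=\frac{x\,\exp\!\bigl[-C(x^{2}+x^{k+1}+x^{p+1})\bigr]}{C\bigl(1+x^{2}+x^{k+1}+x^{p+1}+x^{p+3}\bigr)}-\bigl(x^{3}+x^{k+2}+x^{p}+x^{p+2}+x^{p+4}\bigr),
\]
which is the analogue of the function $\mathcal{L}_{1}$ from the first local lemma with the factor $\delta$ removed. Since $p\ge k\ge 3$ for $D\ge4$, one has $\mathcal{L}(0)=0$, $\mathcal{L}'(0)=1/C>0$, and $\mathcal{L}(x)\to-\infty$ as $x\to\infty$; hence $\mathcal{L}$ is strictly increasing and positive on some interval $(0,x_{0}]$. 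For every $x\in(0,x_{0}]$ and every $d=d(x)<\mathcal{L}(x)$ the displayed bound rearranges to $\|\mathcal{F}(h)\|_{X}\le x$, i.e. $\mathcal{F}\colon B(0,x)\to B(0,x)$.

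For item~2 I would transcribe the contraction estimate from the second local lemma: given $h_{1},h_{2}\in B(0,x)$ with $y=\|h_{1}-h_{2}\|_{Y}$, set $\Theta=\mathcal{F}(h_{1})-\mathcal{F}(h_{2})$, which solves the integral equation \eqref{Theta} with source $\tilde{\psi}$ given by \eqref{psi}. Each difference appearing in $\tilde\psi$ -- $|\tilde h_{1}-\tilde h_{2}|$, $|g_{1}-g_{2}|$, $|\bar g_{1}-\bar g_{2}|$, $|\tilde g_{1}-\tilde g_{2}|$, $|V(\tilde h_{1})-V(\tilde h_{2})|$, $|\partial^{2}_{\tilde h}V(\tilde h_{1})-\partial^{2}_{\tilde h}V(\tilde h_{2})|$ -- is bounded, exactly as in the local case, by $C\,(\text{polynomial in }x)\,y$ times inverse powers of $(1+u)$ and $(1+r+u)$, using \eqref{estimate htilde}, the mean value theorem, and the potential bound \eqref{estimasi Potensial}; together with the decay of $\mathcal{F}_{2}$ and $\mathcal{G}_{2}$ this yields $|\tilde\psi|\le C\,y\,M(x)\,(1+u)^{-3(2k-D)/2}(1+r+u)^{-(D-2)}$ with $M$ explicit and $M(0)=0$. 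Inserting this into \eqref{Theta}, bounding the exponential factor by a constant, and integrating along $\chi_{1}$ with the formula \eqref{integral} (used with $s=0$, $m=k-1$) gives $\|\Theta\|_{Y}\le\tilde\Lambda(x)\,y$, where $\tilde\Lambda(x)=C\,M(x)\exp[C(x^{2}+x^{p+1})]$ satisfies $\tilde\Lambda(0)=0$ and is continuous and increasing near $0$; hence $\tilde\Lambda(x)<1$ on some interval $(0,x_{3}]$. Choosing $x\le\min\{x_{0},x_{3}\}$ and $d=d(x)<\mathcal{L}(x)$ secures both assertions simultaneously.

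The work is bookkeeping rather than conceptual, and the point to watch is that in every application of \eqref{integral} the exponent inequality $q-s+t-m>1$ genuinely holds. For the $\Theta$ source this amounts to $2k-\tfrac{D}{2}-1>1$, which is exactly what $k>\tfrac{D}{2}$ guarantees for $D\ge4$, and it must be rechecked for the genuinely higher-dimensional term $\tfrac{D-4}{D-2}\hat R(\hat\sigma)\,r^{D-5}\!\int_{0}^{r}\bar g\,s^{-(D-4)}\,\mathrm{d}s$ of \eqref{gtilde}, whose extra weight $r^{D-5}$ has no four-dimensional analogue and must be balanced against the decay of $\bar g$ (this is the step that forces the additional $\tilde g$-estimates announced in the introduction). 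One must also verify that the characteristic constants $\kappa(x)^{-(k-1)},\kappa(x)^{-k}$, though not uniform in $x$, are consistently absorbed so that $\mathcal{L}$ and $\tilde\Lambda$ keep the stated behaviour near $x=0$; once that is done, the global statement follows from the local arguments by letting $u_{0}\to\infty$.
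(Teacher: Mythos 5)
Your proposal follows essentially the same route as the paper's proof: item~1 is obtained from the decay estimates of Lemma~\ref{Lemma 1} together with a threshold function (the paper's $\tilde{\Lambda}_1$, your $\mathcal{L}$) whose behaviour near $x=0$ yields $\mathcal{F}\colon B(0,x)\to B(0,x)$ for $d<\tilde{\Lambda}_1(x)$, and item~2 is obtained by estimating $\tilde{\psi}$ term by term, inserting it into \eqref{Theta}, applying the integral formula \eqref{integral}, and taking $\tilde{\Lambda}_2(x)=C\,M(x)\exp[C(x^2+x^{p+1})]<1$ for small $x$. The only difference is cosmetic: the paper carries the characteristic factors $\kappa(x)^{k}$ and $\kappa(x)^{-(k-1)}$ explicitly inside $\tilde{\Lambda}_1$ and $\tilde{\Lambda}_2$ (asserting the same monotonicity properties you state), which is precisely the absorption issue you flag at the end.
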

	\begin{proof}
	From the definition of norm for the space $X$, we have
	\begin{eqnarray}\label{Estimate F}
	\|\mathcal{F}\|_X&\leq&\frac{C_1}{\kappa^k}(d+x^3+x^{k+2}+x^p+x^{p+2}+x^{p+4})(1+x^2+x^{k+1}+x^{p+1}+x^{p+3})\nonumber\\
	&&\times\exp\left[C_2(x^2+x^{k+1}+x^{p+1})\right]\;.
	\end{eqnarray}
	Let us define a function,
	\begin{eqnarray}
	\tilde{\Lambda}_1(x)=\frac{x\kappa^k\exp\left[-C_2(x^2+x^{k+1}+x^{p+1})\right]}{C_1(1+x^2+x^{k+1}+x^{p+1}+x^{p+3})}-\left(x^3+x^{k+2}+x^p+x^{p+2}+x^{p+4}\right).
	\end{eqnarray}
	
	We obtain that $\tilde{\Lambda}_1(0)=0$, $\tilde{\Lambda}_1'(0)>0$, and $\tilde{\Lambda}_1(x)\rightarrow-\infty$ as $x\rightarrow\infty$. Furthermore, there exists $x_0\in(0,x_1)$ such that $\tilde{\Lambda}_1(x)$ is monotonically increasing on $[0,x_0]$. For every $x\in(0,x_0)$, we conclude that $\|\mathcal{F}\|_X\leq x$, and $\mathcal{F}:B(0,x)\rightarrow B(0,x)$ if $d<\tilde{\Lambda}_1(x)$. This proves the first argument. 
	
	We prove the second argument by calculating the estimate (see Appendix \ref{Appendix3})
	\begin{equation}\label{estimate psi2}
	|\tilde{\psi}|\leq\frac{Cy\left[\alpha(x)+\beta(x)+\gamma(x)+\sigma(x)+x^2+x^{p-1}+x^{p+1}+x^{p+3}\right]}{(1+u)^{\frac{3(2k-D)}{2}}(1+r+u)^{D-2}}\;,
	\end{equation}
	where
	\begin{align}
	\alpha(x)=&C(x+x^p)\left(d+x^3+x^{k+2}+x^p+x^{p+2}+x^{p+4}\right)(1+x^2+x^{k+1}+x^{p+1}+x^{p+3})\label{alpha}\nonumber\\
	&\times\exp\left[C(x^2+x^{k+1}+x^{p+1})\right],\\
	\beta(x)=&C(x+x^p)(d+x^3+x^p+x^{p+2})\exp\left[C(x^2+x^{p+1})\right],\\
	\gamma(x)=&C x^p(d+x^3+x^p+x^{p+2})\exp[C(x^2+x^{p+1})],\\
	\sigma(x)=&Cx^{p+2}(d+x^3+x^p+x^{p+2})\exp[C(x^2+x^{p+1})]\label{sigma}.
	\end{align}
	
	As previously, using integral formula (\ref{integral}) we have
	\begin{eqnarray}
	\int_{u}^{u_1} \left[\frac{1}{2r}\left|\frac{\hat{R}(\hat{\sigma})}{(D-2)}g_1-\frac{(D-2)}{2}\tilde{g}_1\right|+\frac{8\pi r g_2|V(\tilde{h}_1)|}{(D-2)}\right]_{\chi_1}\mathrm{d}u'\leq C(x^2+x^{p+1}).
	\end{eqnarray}
	
	Suppose that equation (\ref{Dh higher}) has two different solutions $h_1, h_2 \in X$. We assume
	\begin{align}
	\max\{\|h_1\|_{X},\|h_2\|_{X}\}<x.
	\end{align}
	Then, we write the estimate for (\ref{Theta}) as follows
	\begin{eqnarray}\label{Estimate Theta}
	\left|\Theta(u_1,r_1)\right|\leq \int_{0}^{u_1}\exp[C(x^2+x^{p+1})]\frac{CyM(x)}{(1+u)^{\frac{3(2k-D)}{2}}(1+r+u)^{D-2}}\mathrm{d}u \:,
	\end{eqnarray}
	where
	\begin{eqnarray}
	M(x) = \alpha(x)+\beta(x)+\gamma(x)+\sigma(x)+x^2+x^{p-1}+x^{p+1}+x^{p+3}.
	\end{eqnarray}
	Finally, we get the estimate
	\begin{equation}\label{estimate Theta}
	\left|\Theta(u_1,r_1)\right|\leq C_3\frac{yM(x)\exp[C_4(x^2+x^{p+1})]}{\kappa^{k-1}(1+r_1+u_1)^{k-1}}.
	\end{equation}
	
	From the definition of norm for the space $Y$, we have
	\begin{equation}\label{Lipschitz condition}
	\|\Theta\|_Y \leq\tilde{\Lambda}_2y
	\end{equation}
	with
	\begin{equation}\label{Lipschitz constant}
	\tilde{\Lambda}_2=\frac{C_3}{\kappa^{k-1}}M(x)\exp[C_4(x^2+x^{p+1})].\:
	\end{equation}
	We obtain that $\tilde{\Lambda}_2(0)=0$ and $\tilde{\Lambda}'_2(0)>0$. 
	Furthermore, $\tilde{\Lambda}_2$ is monotonically increasing on $\tilde{x}_0\in \mathbb{R}^+$. There exists $x_2\in \mathbb{R}^+$ such that $\tilde{\Lambda}_2(x)<1$ for all $x$ in $(0,x_2]$. 
	Hence, the mapping $h\mapsto\mathcal{F}(h)$ is contraction in $Y$ for $\|h\|_X\leq x_2$. This proves the second argument. 
	\end{proof}
	
	Given that $Y$ containing $X$, since $h\mapsto\mathcal{F}(h)$ is contraction mapping in $Y$, it is guaranteed that $h\mapsto\mathcal{F}(h)$ is also contraction mapping in $X$. Using Banach fixed theorem, there exists a unique fixed point $h\in X$ such that $\mathcal{F}(h)=h$.
	
	In the last part of this section, we show that $h$ is a classical solution of the equation (\ref{Dh higher}). For this purpose, we will show that $\left|\frac{\partial h}{\partial u}\right|$ is also bounded. From (\ref{Dh higher}), we calculate the estimate (see Appendix \ref{Appendix4})
	\begin{eqnarray}\label{estimate dhdu}
	\left|\frac{\partial h}{\partial u}\right|\leq\frac{C(K_1(x)+K_2(x)+K_3+x^3+x^p+x^{p+2})}{(1+u)^{\frac{3(2k-D)}{2}}(1+r+u)^{\frac{3(D-2)}{2}-1}},
	\end{eqnarray}
	where
	\begin{eqnarray}
	K_1(x)&=&C(d+x^3+x^{k+2}+x^p+x^{p+2}+x^{p+4})(1+x^2+x^{k+1}+x^{p+1}+x^{p+3})\label{K1}\nonumber\\
	&&\times\exp\left[C(x^2+x^{k+1}+x^{p+1})\right],\\
	K_2(x)&=&C(x^2+x^{p+1})(d+x^3+x^p+x^{p+2})\exp\left[C(x^2+x^{p+1})\right],\\
	K_3(x)&=&Cx^{p+1}(d+x^3+x^p+x^{p+2})\exp\left[C(x^2+x^{p+1})\right]\label{K3}.
	\end{eqnarray}
	
	We can finally state our main theorem of global existence as follows
	\begin{theorem} \label{Theorem 1}
	Setting $D\geq 4$. Let $X$ and $Y$ be the function spaces defined by (\ref{space X}) and (\ref{space Y}) respectively. For a given an initial data $h(0,r)\in C^1[0,\infty)$ and the positive constant $(D-2)$-spatial Ricci scalar of compact manifold $\hat{R}(\hat{\sigma})$ such that $h(0,r)=O(r^{-{(k-1)}})$ and $\frac{\partial h}{\partial r}(0,r)=O(r^{-k})$ as $r\rightarrow \infty$, there exists a global classical solution of equation (\ref{Dh higher}) for $k>\frac{D}{2}$ and $p\in[k,\infty)$, as follows
		\begin{eqnarray}
		h(u,r)\in C^1 ([0,\infty)\times [0,\infty)). 
		\end{eqnarray}
	\end{theorem}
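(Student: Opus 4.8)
\textbf{Proof proposal for Theorem \ref{Theorem 1} (Global existence).}
The plan is to run the Banach fixed point argument already laid out in the local case, but now in the global spaces $X$ and $Y$ of (\ref{space X})--(\ref{space Y}), taking advantage of the fact that Lemma \ref{Lemma 1} and Lemma \ref{Lemma Banach} were proved \emph{without} any smallness hypothesis on $u_0$. First I would fix the exponent $k>\tfrac{D}{2}$ (a positive integer) and $p\in[k,\infty)$, and let $d=\|h_0\|_{X_0}$ be determined by the initial data, which by hypothesis decays like $O(r^{-(k-1)})$ with $\tfrac{\partial h}{\partial r}(0,r)=O(r^{-k})$, so that $d<\infty$. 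From Lemma \ref{Lemma Banach}, argument (1), one extracts a radius $x_0\in(0,x_1)$ on which $\tilde\Lambda_1$ is increasing, and argument (2) gives $x_2$ with $\tilde\Lambda_2(x)<1$ for $x\in(0,x_2]$. Choose $x=\min\{x_0,x_2\}$ and then require $d<\tilde\Lambda_1(x)$ (shrinking $x$ if necessary, using $\tilde\Lambda_1'(0)>0$ and $\tilde\Lambda_1(0)=0$); this is exactly the compatibility condition $d=d(x)$ in the statement of Lemma \ref{Lemma Banach}.

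With these choices in place, $\mathcal{F}$ maps the closed ball $B(0,x)\subset X$ into itself (Lemma \ref{Lemma Banach}(1)) and is a contraction in the $Y$-norm on that ball (Lemma \ref{Lemma Banach}(2)). Since $Y\supset X$ and $\|\cdot\|_Y\leq\|\cdot\|_X$, the contraction in $Y$ restricted to the $X$-complete ball $B(0,x)$ still has the closed ball as a complete metric space under $d_Y$, so Banach's fixed point theorem yields a unique $h\in B(0,x)$ with $\mathcal{F}(h)=h$. By construction of $\mathcal{F}$ via (\ref{Fcurl})--(\ref{f prisma}), this fixed point solves the integral equation equivalent to (\ref{Dh higher}), and the decay bounds (\ref{decay1})--(\ref{decay2}) of Lemma \ref{Lemma 1} show $h(u,r)=O(r^{-(k-1)})$ and $\tfrac{\partial h}{\partial r}(u,r)=O(r^{-k})$ as $r\to\infty$, uniformly for all $u\geq 0$.

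It remains to upgrade the fixed point from a solution of the integral equation to a genuine $C^1$ (classical) solution of (\ref{Dh higher}) on $[0,\infty)\times[0,\infty)$. Here I would invoke the estimate (\ref{estimate dhdu}): since $h\in X$ gives control of $h$, $\tfrac{\partial h}{\partial r}$, and hence of $\tilde h$, $g$, $\tilde g$, $\tfrac{\partial\tilde g}{\partial r}$ and the potential terms through (\ref{htilde})--(\ref{gtilde}) and (\ref{estimasi Potensial}), the right-hand side of (\ref{Dh higher}) is a continuous function of $(u,r)$, so $\tfrac{\partial h}{\partial u}$ exists, is continuous, and obeys the stated decay; together with $\tfrac{\partial h}{\partial r}\in C^0$ this gives $h\in C^1([0,\infty)\times[0,\infty))$. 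Finally, differentiating the characteristic representation and using that the decay of $h$ forces $\tilde g$ to remain bounded prevents the characteristics $\chi(u;r_0)$ from reaching $r=0$ in finite $u$, so the solution genuinely extends to all $u\geq 0$ rather than merely to a maximal finite interval.

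\textbf{Main obstacle.} The only real subtlety, as the authors flag in the introduction, is the second term of $\tilde g$ in (\ref{gtilde}), $\tfrac{(D-4)}{(D-2)}\hat R(\hat\sigma)r^{D-5}\int_0^r\bar g\,s^{-(D-4)}\mathrm{d}s$, which is absent when $D=4$; one must check that the global integral formula (\ref{integral}) still produces an integrable-in-$u$ tail for every contribution of $f_1$ and $\tilde\psi$ coming from this term and from $\tfrac{\partial\tilde g}{\partial r}$, i.e.\ that the relevant exponents satisfy $q-s+t-m>1$ so the $u$-integrals converge — this is precisely what the estimates (\ref{estimate gbar})--(\ref{integral gbar}) and their $Y$-analogues are arranged to guarantee, and it is what makes the passage $u_0\to\infty$ legitimate rather than merely formal.
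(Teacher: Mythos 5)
Your proposal follows essentially the same route as the paper: work in the global spaces $X$, $Y$, combine the decay estimates of Lemma \ref{Lemma 1} with the self-map and $Y$-contraction properties of Lemma \ref{Lemma Banach}, apply the Banach fixed point theorem on the ball $B(0,x)$, and upgrade the fixed point to a classical solution via the bound (\ref{estimate dhdu}) on $\left|\frac{\partial h}{\partial u}\right|$. One caveat: since $\tilde{\Lambda}_1(0)=0$, shrinking $x$ only makes $\tilde{\Lambda}_1(x)$ smaller, so the requirement $d<\tilde{\Lambda}_1(x)$ cannot be arranged by choosing $x$ — it is an implicit smallness condition on the initial data $d$, which is exactly the same hypothesis the paper's own proof uses (and likewise omits from the statement of Theorem \ref{Theorem 1}); your closing remark about characteristics never reaching $r=0$ is also unnecessary, since the integral representation (\ref{Fcurl}) only follows characteristics backwards from $(u_1,r_1)$ to $u=0$, along which $r$ increases.
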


\section{Completeness Properties of the Spacetime}
\label{sec4}
Let us define a mass-like function for $D\geq 4$ in Bondi coordinates
\begin{align}\label{M}
M:=\frac{r^{D-3}}{2}\left[\frac{2\hat{R}(\hat{\sigma})}{(D-2)^2}-\frac{\tilde{g}}{g}\right].
\end{align}
If we take $D=4$, equation (\ref{M}) reduce to equation (4.3) in \cite{Chris1}. Then, from
(\ref{g}) we know that $g$ is monotonically nondecreasing function of $r$ at each $u$, and also $\tilde{g}\leq g$. Therefore $M$ is nonnegative. Furthermore, $M$ vanishes at $r=0$.
\begin{proposition}\label{Proposition 1}
	Let $D\geq 4$. For a given positive constant $(D-2)$-spatial Ricci scalar of compact manifold $\hat{R}(\hat{\sigma})$, $M$ is a monotonically nondecreasing function of $r$ at each $u$.
\end{proposition}
\begin{proof}
	From (\ref{g}) and (\ref{gtilde}), we get
	\begin{align}
	\frac{\partial g}{\partial r}=&\frac{8\pi g}{(D-2)}\frac{1}{r}\left(h-\frac{(D-2)}{2}\tilde{h}\right)^2,\\
	\frac{\partial \tilde{g}}{\partial r}=&\left[\frac{1}{r}\frac{\hat{R}(\hat{\sigma})}{(D-2)}-\frac{(D-3)}{r}\frac{\tilde{g}}{g}+\frac{16\pi r}{(D-2)}V(\tilde{h})\right]g.
	\end{align}
	Then, we calculate
	\begin{align}\label{dMdr}
	\frac{\partial M}{\partial r}=\frac{(D-4)\hat{R}(\hat{\sigma})}{2(D-2)^2}r^{D-4}+\frac{4\pi}{(D-2)}\left[\frac{\tilde{g}}{g}\left(h-\frac{(D-2)}{2}\tilde{h}\right)^2-2r^2V(\tilde{h})\right]r^{D-4}.
	\end{align}
	We choose that $V(\tilde{h})$ is a negative function, for example $V(\tilde{h})=-\frac{1}{p+1}|\tilde{h}|^{p+1}$ as defined in \cite{Chae}, such that we have $\frac{\partial M}{\partial r}\geq 0$. Thus, $M$ is a monotonically nondecreasing function of $r$ at each $u$.
\end{proof}
\begin{proposition}\label{Proposition 2}
	Let $D\geq 4$. For a given positive constant $(D-2)$-spatial Ricci scalar of compact manifold $\hat{R}(\hat{\sigma})$, $M$ is a monotonically nonincreasing function of $u$.
\end{proposition}
\begin{proof}
	Let us consider the characteristics equation
	$\frac{\mathrm{d}r}{\mathrm{d}u}=-\frac{1}{2}\tilde{g}(u,r).$ Using the chain rule we calculate
	\begin{align}
	\frac{\partial M}{\partial u}=&-\frac{1}{2}\tilde{g}\left[\frac{\partial M}{\partial r}=\frac{(D-4)\hat{R}(\hat{\sigma})}{2(D-2)^2}r^{D-4}+\frac{4\pi}{(D-2)}\left[\frac{\tilde{g}}{g}\left(h-\frac{(D-2)}{2}\tilde{h}\right)^2-2r^2V(\tilde{h})\right]r^{D-4}\right].
	\end{align}
	Using similar proof as Proposition \ref{Proposition 1}, we can easily get $\frac{\partial M}{\partial u}\leq 0$ such that $M$ is a monotonically nonincreasing function of $u$.
\end{proof}

We assume that there exist the initial total mass $M_0$ that is finite such that
\begin{align}
\lim_{r\rightarrow\infty}M(0,r):=M_0.
\end{align}
Since $M(u,r)$ is monotonically nondecreasing with respect to $r$ at each $u$ and also bounded by $M_0$, there exists the total mass at retarted time $u$ defined by $M_1:=M_1(u)$ such that
\begin{align}
\lim_{r\rightarrow\infty}M(u,r):=M_1
\end{align}
for all $u\geq 0$. According to \cite{Bondi}, $M_1$ is defined as the Bondi mass. Furthermore, since the Bondi mass $M_1(u)$ is nonegative monotonically nonincreasing function of $u$, there exist the final Bondi mass $M_2$ such that
\begin{align}\label{final Bondi}
\lim_{u\rightarrow\infty} M_1(u):=M_2.
\end{align}

Following \cite{Chris1}, we shall show
\begin{theorem}
Let $D\geq 4$. For a given positive constant $(D-2)$-spatial Ricci scalar of compact manifold $\hat{R}(\hat{\sigma})$ and the final Bondi mass that is defined in (\ref{final Bondi}), the timelike line $r=r_0$ is complete toward future if $r_0>\frac{M_2(D-2)^2}{\hat{R}(\hat{\sigma})}$.
\end{theorem}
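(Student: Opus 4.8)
The plan is to reduce future completeness of the time-like line $r=r_0$ to the divergence of an explicit proper-time integral, and then to bound that integral from below using the decay estimates of Lemma \ref{Lemma 1} together with the monotonicity of the mass-like function established in Propositions \ref{Proposition 1} and \ref{Proposition 2}. First I would make the proper time explicit. Along the curve $\gamma_{r_0}\colon u\mapsto(u,r_0)$, which is time-like and future-directed by the very construction of the Bondi coordinate system, the metric (\ref{metric}) restricts to $\mathrm{d}s^2=-e^{2F(u,r_0)}\,\mathrm{d}u^2$, so the proper time elapsed up to retarded time $U$ is
\[
\tau(U)=\int_0^U e^{F(u,r_0)}\,\mathrm{d}u .
\]
Since $g=e^{F+G}$ and $\tilde g=e^{F-G}$ give $e^{2F}=g\tilde g$, the line $\gamma_{r_0}$ is complete toward the future if and only if $\int_0^\infty\sqrt{g(u,r_0)\,\tilde g(u,r_0)}\,\mathrm{d}u=+\infty$; hence it suffices to exhibit constants $c>0$ and $U_0\ge 0$ with $g(u,r_0)\,\tilde g(u,r_0)\ge c^2$ for all $u\ge U_0$, for then $\tau(U)\ge c\,(U-U_0)\to\infty$.

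The next step is to estimate the two factors separately. For $g$: from (\ref{g}) one has $0<g\le 1$, and the decay estimates of Lemma \ref{Lemma 1} applied to $h-\tfrac{(D-2)}{2}\tilde h$ make $\int_{r_0}^{\infty}\frac{8\pi}{(D-2)s}\big(h-\tfrac{(D-2)}{2}\tilde h\big)^2\,\mathrm{d}s$ finite for each $u$ and $o(1)$ as $u\to\infty$; hence $g(u,r_0)\to 1$, so $g(u,r_0)\ge\tfrac12$ for all $u\ge U_1$ with $U_1$ large enough. For $\tilde g$: solving the definition (\ref{M}) of $M$ along $r=r_0$ writes the ratio $\tfrac{\tilde g}{g}(u,r_0)$ as an affine function of $M(u,r_0)/r_0^{D-3}$ — for $D=4$ this is precisely $\tfrac{\tilde g}{g}=\tfrac{\hat{R}(\hat{\sigma})}{2}-\tfrac{2M}{r}$, the relation used in \cite{Chris1}. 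By Proposition \ref{Proposition 1}, $M$ is nondecreasing in $r$, so $M(u,r_0)\le M_1(u)$; by Proposition \ref{Proposition 2}, $M_1(u)$ is nonincreasing, and by (\ref{final Bondi}) it decreases to $M_2$. Inserting these bounds into the affine expression and using the hypothesis $r_0>\big(\tfrac{2M_2(D-2)}{\hat{R}(\hat{\sigma})}\big)^{1/(D-3)}$ — equivalently $\tfrac{2M_2}{r_0^{D-3}}<\tfrac{\hat{R}(\hat{\sigma})}{D-2}$ — one obtains, for every sufficiently small $\varepsilon>0$, a time $U_\varepsilon$ such that $\tfrac{\tilde g}{g}(u,r_0)\ge c_\varepsilon>0$ for all $u\ge U_\varepsilon$. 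Combining with $g(u,r_0)\ge\tfrac12$ yields $g\tilde g=g^2\cdot\tfrac{\tilde g}{g}\ge\tfrac14 c_\varepsilon$ on the tail $u\ge U_0:=\max\{U_1,U_\varepsilon\}$, which is the required lower bound; hence $\tau(U)\to\infty$ and $\gamma_{r_0}$ is future-complete.

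The step I expect to be the main obstacle is the lower bound for $\tilde g/g$: one must disentangle the two monotonicities of $M$ carefully, so that $M(u,r_0)$ — taken at a fixed radius but variable retarded time — is ultimately controlled by a quantity governed by the \emph{final} Bondi mass $M_2$ rather than by the (only nonincreasing, hence a priori still large) function $M_1(u)$; this is where Propositions \ref{Proposition 1}--\ref{Proposition 2} and the monotone convergence $M_1(u)\to M_2$ must be combined in the right order, and where one checks that the critical radius $\big(\tfrac{2M_2(D-2)}{\hat{R}(\hat{\sigma})}\big)^{1/(D-3)}$ — the higher-dimensional analog of the Schwarzschild radius, and the ``event-horizon-like'' region referred to in the statement — is precisely the threshold that keeps the affine lower bound for $\tilde g/g$ on the positive side. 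One must also verify that the contributions of the potential $V(\tilde h)$ and of the extra dimension-dependent term in (\ref{gtilde}) do not destroy this positivity; this is routine given the growth condition (\ref{estimasi Potensial}) and Corollary \ref{Corollary1}, and is handled exactly as in the proof of Proposition \ref{Proposition 1}.
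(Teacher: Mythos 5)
Your overall skeleton matches the paper: completeness is reduced to divergence of $\int_0^\infty e^{F(u,r_0)}\,\mathrm{d}u$, one uses $e^{2F}=g\tilde g$, the chain $M(u,r_0)\le M_1(u)$ with $M_1$ nonincreasing and $M_1(u)\to M_2$, and the threshold radius $r_0^{D-3}>\frac{2M_2(D-2)}{\hat{R}(\hat{\sigma})}$. The lower bound on $g$ is also harmless (estimate (\ref{estimate g}) already gives a uniform positive lower bound). The genuine gap is the step where you claim $\tilde g/g\ge c_\varepsilon>0$ by ``solving'' the definition (\ref{M}). Inverting (\ref{M}) gives $\frac{\tilde g}{g}=\frac{1}{D-5}\left(\frac{2M}{r_0^{D-3}}-\frac{\hat{R}(\hat{\sigma})}{D-2}\right)$: at $D=5$ the coefficient of $\tilde g/g$ in (\ref{M}) vanishes and no inversion exists at all, and for $D>5$ the slope $\frac{1}{D-5}$ is \emph{positive}, so the only information available — the upper bound $M(u,r_0)\le M_1(u)$, $M_1(u)\downarrow M_2$, together with $\frac{2M_2}{r_0^{D-3}}<\frac{\hat{R}(\hat{\sigma})}{D-2}$ — yields an \emph{upper} bound on $\tilde g/g$ (indeed one that is eventually negative), not the positive lower bound your argument needs. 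No choice of $\varepsilon$ repairs this, because the monotonicity in $M$ points the wrong way. As written, your key inequality holds only for $D=4$, where the slope is $-1$ and the Christodoulou relation $\tilde g/g=\frac{\hat{R}(\hat{\sigma})}{2}-\frac{2M}{r}$ does convert the upper bound on $M$ into a lower bound on $\tilde g/g$; the theorem, however, is claimed for all $D\ge4$. (Your closing remark about the potential is also misplaced in your route: once you invert (\ref{M}) algebraically, $V$ never enters, so there is nothing for Corollary \ref{Corollary1} to control there.)

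The paper avoids the pointwise inversion entirely. It uses the normalization $\tilde g(u,r)\to1$ as $r\to\infty$, writes $-\log\tilde g(u,r_0)=\int_{r_0}^{\infty}\frac{1}{\tilde g}\frac{\partial\tilde g}{\partial r}\,\mathrm{d}r$ with $\frac{\partial\tilde g}{\partial r}=\left[\frac{2M}{r^{D-2}}+\frac{16\pi r}{D-2}V(\tilde h)\right]g$ (a consequence of (\ref{gtilde}) and (\ref{M})), replaces $M$ by $M_1(u)$ inside the $r$-integral and estimates the potential term, and evaluates the resulting integral explicitly to get $\tilde g(u,r_0)\ge\left(\frac{\hat{R}(\hat{\sigma})}{D-2}-\frac{2M_1(u)}{r_0^{D-3}}\right)^{-\frac{D-5}{D-3}}$, the dimension being tracked by the exponent $-\frac{D-5}{D-3}$ rather than by an affine inversion. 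It then uses $e^{F}\ge\tilde g$ (from $e^{2F}=g\tilde g$ and $\tilde g\le g$) — so no separate bound on $g$ is needed — and only at the last step lets $M_1(u)\to M_2$ to obtain a uniform positive lower bound on the proper-time density for $r_0^{D-3}>\frac{2M_2(D-2)}{\hat{R}(\hat{\sigma})}$. To salvage your decomposition $g\tilde g=g^{2}\,(\tilde g/g)$ you would have to replace the algebraic inversion of (\ref{M}) by such an $r$-integration of $\partial_r\log\tilde g$ (or some other $D$-uniform lower bound on $\tilde g$); without that, the central step of your proposal fails for $D\ge5$.
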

\begin{proof}
From (\ref{gtilde}), we have $\log \tilde{g}(u,r)\rightarrow 0$ as $r\rightarrow\infty$ at each $u$. We obtain
\begin{align}
-\log\tilde{g}(u,r_0)=&\int_{r_0}^{\infty}\frac{1}{\tilde{g}}\frac{\partial \tilde{g}}{\partial r}\mathrm{d}r\nonumber\\
=&\int_{r_0}^{\infty}\left[\frac{2(D-3)M}{r^{D-2}}-\frac{\hat{R}(\hat{\sigma})}{(D-2)}\left(\frac{2(D-3)}{(D-2)}-1\right)\frac{1}{r}+\frac{16\pi r V(\tilde{h})}{(D-2)}\right]\nonumber\\
&\times\left[\frac{2\hat{R}(\hat{\sigma})}{(D-2)^2}-\frac{2M}{r^{D-3}}\right]^{-1}\mathrm{d}r.
\end{align}
Since $V(\tilde{h})$ is a negative function and $M(u,r)\leq M_1(u)$, if $r_0^{D-3}>\frac{M_1(u)(D-2)^2}{\hat{R}(\hat{\sigma})}$ then
\begin{align}
-\log \tilde{g}(u,r_0)\leq& \int_{r_0}^{\infty}\frac{2(D-3)M_1(u)}{r^{D-2}}\left(\frac{2\hat{R}(\hat{\sigma})}{(D-2)^2}-\frac{2M_1(u)}{r^{D-3}}\right)^{-1}\mathrm{d}r\nonumber\\
\leq& -\log \left(\frac{\hat{R}(\hat{\sigma})}{(D-2)^2}-\frac{M_1(u)}{{r_0}^{D-3}}\right).
\end{align}
From (\ref{g}) and (\ref{gtilde}) we have $e^{2F}=g\tilde{g}$. Since $\tilde{g}\leq g$, thus
\begin{align}
e^{F(u,r_0)}\geq \tilde{g}(u,r_0)\geq\frac{\hat{R}(\hat{\sigma})}{(D-2)^2}-\frac{M_1(u)}{{r_0}^{D-3}}.
\end{align}
Let us introduce $e^{F(u,r_0)}\mathrm{d}u$ as the proper time element along the line $r=r_0$. Since $M_1(u)\rightarrow M_2$ for $u\rightarrow \infty$, if $r_0^{D-3}>\frac{M_2(D-2)^2}{\hat{R}(\hat{\sigma})}$ there exists $u_1,u_2$ where $u_2>u_1$ such that
\begin{align}
\int_{u_1}^{u_2}e^{F(u,r_0)}\mathrm{d}u\geq&\int_{u_1}^{u_2}\left(\frac{\hat{R}(\hat{\sigma})}{(D-2)^2}-\frac{M_1(u)}{{r_0}^{D-3}}\right)\mathrm{d}u\nonumber\\
\geq&\left(\frac{\hat{R}(\hat{\sigma})}{(D-2)^2}-\frac{M_2}{{r_0}^{D-3}}\right)(u_2-u_1)\rightarrow \infty
\end{align}
as $u_2\rightarrow\infty$. The above equation ensures that the points of the hypersurfaces are at future timelike infinity for $r_0>\frac{M_2(D-2)^2}{\hat{R}(\hat{\sigma})}$. The completeness of spacetime along the future directed timelike lines outward to a region which resembles the event horizon of the black hole. This is the end of the proof.
\end{proof}
	\section{Appendix}
	\subsection{Estimate for (\ref{fk})}\label{Appendix1}
	We write the estimate for (\ref{fk}) as follows
	\begin{align}
	|f|\leq&\frac{(D-2)}{4r}\left|\frac{\hat{R}(\hat{\sigma})}{(D-2)}g-\frac{(D-2)}{2}\tilde{g}\right|\left|\tilde{h}\right|+\frac{8\pi g r}{(D-2)}\left|V(\tilde{h})\right|\left|\tilde{h}\right|+\frac{gr}{2}\left|\frac{\partial V(\tilde{h})}{\partial \tilde{h}}\right|,\nonumber\\
	\leq& A_1 + A_2 + A_3
	\end{align}
	\begin{enumerate}
		\item Estimate for $A_1$\\
		First, we calculate
		\begin{align}\label{htilde higher}
		|\tilde{h}|\leq \frac{1}{r^{\frac{(D-2)}{2}}}\int_{0}^{r}\frac{\|h\|_X}{(1+s+u)^{k-1}}\frac{1}{s^{\frac{(4-D)}{2}}}~\mathrm{d}s\leq\frac{2x}{(2k-D)}\frac{1}{(1+u)^{\frac{(2k-D)}{2}}(1+r+u)^{\frac{(D-2)}{2}}}.
		\end{align}
		Then, we obtain
		\begin{align}\label{h-h}
		\left|h-\frac{(D-2)}{2}\tilde{h}\right|\leq& |h| + \left|\frac{(D-2)}{2}\tilde{h}\right|\nonumber\\
		\leq&\frac{x}{(1+r+u)^{k-1}}+\frac{2x}{(2k-D)}\frac{1}{(1+u)^{\frac{(2k-D)}{2}}(1+r+u)^{\frac{(D-2)}{2}}}\nonumber\\
		\leq& \frac{Cx}{(1+u)^{\frac{(2k-D)}{2}}(1+r+u)^{\frac{(D-2)}{2}}}.
		\end{align}
		From the above estimate, we get
		\begin{align}
		|g(u,r)-g(u,r')|\leq&\int_{r'}^{r}\left|\frac{\partial g}{\partial s}(u,s) \right|\mathrm{d}s\leq \frac{8\pi}{(D-2)} \int_{r'}^{r}\frac{1}{s}\left|h-\frac{(D-2)}{2}\tilde{h}\right|^2\mathrm{d}s\nonumber\\
		\leq& \frac{8\pi}{(D-2)^2}\frac{x^2}{(1+u)^{2k-D}}\left[\frac{1}{(1+r'+u)^{D-2}}-\frac{1}{(1+r+u)^{D-2}}\right],
		\end{align}
		and
		\begin{align}\label{g-gbar k}
		|(g-\bar{g})(u,r)|\leq& \frac{1}{r}\int_{0}^{r}|g(u,r)-g(u,r')|\mathrm{d}r'
		\nonumber\\
		\leq& \frac{8\pi x^2}{(D-3)(D-2)^2}\frac{1}{(1+u)^{2k-3}(1+r+u)}.
		\end{align}
		Thus
		\begin{align}
		\int_{r}^{\infty}\frac{1}{s}\left|h-\frac{(D-2)}{2}\tilde{h}\right|^2\mathrm{d}s\leq\frac{x^2}{(D-2)(1+u)^{2k-D}(1+r+u)^{D-2}},
		\end{align}
		such that we obtain
		\begin{align}
		|g (u,r)|=& \exp \left[-\frac{8\pi}{(D-2)}\int_{r}^{\infty}\frac{1}{s}\left|h-\frac{(D-2)}{2}\tilde{h}\right|^2\mathrm{d}s\right]\nonumber\\
		\geq&\exp \left[-\frac{8\pi x^2}{(D-2)^2(1+u)^{2k-D}(1+r+u)^{D-2}}\right].
		\end{align}
		As a consequence, we can calculate
		\begin{align}
		|\bar{g}|\geq |g| + |g-\bar{g}|\geq\frac{8\pi x^2}{(D-3)(D-2)^2}\frac{1}{(1+u)^{2k-3}(1+r+u)},
		\end{align}
		and
		\begin{align}
		\frac{(D-4)}{2}\frac{\hat{R}(\hat{\sigma})}{r^{D-3}}\int_{0}^{r}|\bar{g}|s^{D-4}\mathrm{d}s\leq \frac{Cx^2}{(1+u)^{2k-3}(1+r+u)}.
		\end{align}
		To proceed, we calculate
		\begin{align}
		\frac{8\pi}{r^{D-3}}\int_{0}^{r} gs^{D-2}|V(\tilde{h})|\mathrm{d}s\leq \frac{Cx^{p+1}}{(1+u)^{k^2-D}(1+r+u)^{D-3}}.
		\end{align}
		Now, we obtain
		\begin{align}
		\left|\frac{\hat{R}(\hat{\sigma})}{(D-2)}g-\frac{(D-2)}{2}\tilde{g}\right|\leq&|g-\bar{g}|	+\frac{(D-4)}{2}\frac{\hat{R}(\hat{\sigma})}{r^{D-3}}\int_{0}^{r}|\bar{g}|s^{D-4}\mathrm{d}s\nonumber\\
		&+\frac{8\pi}{r^{D-3}}\int_{0}^{r} gs^{D-2}|V(\tilde{h})|\mathrm{d}s\nonumber\\
		\leq& \frac{C(x^2+x^{p+1})}{(1+u)^{2k-3}(1+r+u)}.
		\end{align}
		Hence,
		\begin{align}
		A_1=\frac{(D-2)}{4r}\left|\frac{\hat{R}(\hat{\sigma})}{(D-2)}g-\frac{(D-2)}{2}\tilde{g}\right|\left|\tilde{h}\right|\leq\frac{C(x^3+x^{p+2})}{(1+u)^{\frac{6(k-1)-D}{2}}(1+r+u)^{\frac{D+2}{2}}}.
		\end{align}
		\item Estimate for $A_2$\\
		Using (\ref{htilde}) and (\ref{estimasi Potensial}), we obtain
		\begin{align}
		A_2=\frac{8\pi g r}{(D-2)}\left|V(\tilde{h})\right||\tilde{h}|\leq\frac{Cx^{p+2}}{(1+u)^{\frac{(2k-D)}{2}(k+2)}(1+r+u)^{\frac{(D-2)(k+2)-2}{2}}}.
		\end{align}
		\item Estimate for $A_3$\\
		Again, using (\ref{estimasi Potensial}) we get
		\begin{align}
		A_3=\frac{gr}{2}\left|\frac{\partial V(\tilde{h})}{\partial \tilde{h}}\right|\leq\frac{Cx^p}{(1+u)^{\frac{(2k-D)k}{2}}(1+r+u)^{\frac{(D-2)k-2}{2}}}.
		\end{align}
	\end{enumerate}
	Finally, we obtain
	\begin{align}
	|f|\leq \frac{C(x^3+x^p+x^{p+2})}{(1+u)^{\frac{(2k-D)k}{2}}(1+r+u)^{\frac{(D-2)k-2}{2}}}.
	\end{align}
	
	\subsection{Estimate for (\ref{f1 higher})}\label{Appendix2}
	We write the estimate of (\ref{f1 higher}) as follows:
	\begin{align}
	|f_1|\leq& \left\{\left|\frac{1}{2r}\left(\frac{\hat{R}(\hat{\sigma})}{(D-2)}\frac{\partial g}{\partial r}-\frac{(D-2)}{2}\frac{\partial\tilde{g}}{\partial r}\right)\right|+\left|\frac{1}{2r^2}\left(\frac{\hat{R}(\hat{\sigma})}{(D-2)}g-\frac{(D-2)}{2}\tilde{g}\right)\right|\right.\nonumber\\
	&\left.+\left|\frac{8\pi g r}{(D-2)} \frac{\partial V(\tilde{h})}{\partial \tilde{h}} \frac{\partial \tilde{h}}{\partial r}\right|+\left|\frac{8\pi gV(\tilde{h})}{(D-2)}\right|+\left|\frac{8\pi rV(\tilde{h})}{(D-2)}\frac{\partial g}{\partial r}\right|\right\}\left(\left|\mathcal{F}\right|+\frac{(D-2)}{2}\left|\tilde{h}\right|\right)\nonumber\\
	& +\left\{\left|\frac{1}{2r} \left(\frac{\hat{R}(\hat{\sigma})}{(D-2)}g-\frac{(D-2)}{2}\tilde{g}\right)\right|+\left|\frac{8\pi g rV(\tilde{h})}{(D-2)}\right|+\left|\frac{g r }{2}\frac{\partial^2V(\tilde{h})}{\partial\tilde{h}^2}\right|\right\}\left|\frac{\partial \tilde{h}}{\partial r}\right|+\left|\frac{r}{2}\frac{\partial g}{\partial r}+\frac{g}{2}\right|\left|\frac{\partial V(\tilde{h})}{\partial\tilde{h}}\right|\nonumber\\
	=& (B_1+B_2+B_3+B_4+B_5)\left(\left|\mathcal{F}\right|+\frac{(D-2)}{2}\left|\tilde{h}\right|\right) + (B_6+B_7+B_8)\left|\frac{\partial \tilde{h}}{\partial r}\right| + B_9
	\end{align}
	\begin{enumerate}
		\item Estimate for $B_1$\\
		We use the formula $\frac{\partial\bar{g}}{\partial r}=\frac{g-\bar{g}}{r}$ to obtain
		\begin{align}
		\left|\frac{\partial \tilde{g}}{\partial r}\right|\leq&\frac{\hat{R}(\hat{\sigma})}{(D-2)}\frac{|g-\bar{g}|}{r}+\frac{\hat{R}(\hat{\sigma})(D-4)(D-3)}{(D-2)r^{D-2}}\int_{0}^{r}|\bar{g}|s^{D-4}\mathrm{d}s+\frac{\hat{R}(\hat{\sigma})(D-4)}{(D-2)}\frac{|\bar{g}|}{r}\nonumber\\
		&+\frac{(D-3)16\pi}{(D-2)r^{D-2}}\int_{0}^{r} \left|g\right|s^{D-2}\left|V(\tilde{h})\right|\mathrm{d}s+\frac{16\pi |g|r\left|V(\tilde{h})\right|}{(D-2)}.
		\end{align}
		The estimate (\ref{g-gbar k}) yields
		\begin{align}
		\frac{|g-\bar{g}|}{r}\leq\frac{Cx^2r^{D-4}}{(1+u)^{2k-3}(1+r+u)^{D-2}}.
		\end{align}
		Using (\ref{estimate gbar}), we get
		\begin{align}
		\frac{\hat{R}(\hat{\sigma})(D-4)(D-3)}{(D-2)r^{D-2}}\int_{0}^{r}|\bar{g}|s^{D-4}\mathrm{d}s\leq\frac{Cx^2}{(1+u)^{2k-3}(1+r+u)^2},
		\end{align}
		and
		\begin{align}
		\frac{(D-4)}{(D-2)}\hat{R}(\hat{\sigma})\frac{1}{r}|\bar{g}|\leq\frac{8\pi x^2}{(D-3)(D-2)^2(1+u)^{2k-3}(1+r+u)^2}.
		\end{align}
		Then, using (\ref{htilde}) we have
		\begin{align}
		\frac{(D-3)16\pi}{(D-2)r^{D-2}}\int_{0}^{r} \left|g\right|s^{D-2}\left|V(\tilde{h})\right|\mathrm{d}s\leq \frac{Cx^{p+1}}{(1+u)^{k^2-D}(1+r+u)^{D-2}},
		\end{align}
		and
		\begin{align}
		\frac{16\pi gr}{(D-2)}|V(\tilde{h})|\leq\frac{Cx^{p+1}r}{(1+u)^{\frac{(2k-D)(k+1)}{2}}(1+r+u)^{\frac{(D-2)(k+1)}{2}}}.
		\end{align}
		Thus,
		\begin{align}
		\frac{(D-2)}{2}\left|\frac{\partial\tilde{g}}{\partial r}\right|\leq \frac{C(x^2+x^{k+1}+x^{p+1})r}{(1+u)^{2k-3}(1+r+u)^3}.
		\end{align}
		Using (\ref{g}) we calculate
		\begin{align}\
		\frac{\hat{R}(\hat{\sigma})}{(D-2)}\left|\frac{\partial g}{\partial r}\right|\leq\frac{Cx^2r}{(1+u)^{2k-D}(1+r+u)^{D}}.
		\end{align}
		Finally, we obtain
		\begin{align}
		B_1=\left|\frac{1}{2r}\left(\frac{\hat{R}(\hat{\sigma})}{(D-2)}\frac{\partial g}{\partial r}-\frac{(D-2)}{2}\frac{\partial\tilde{g}}{\partial r}\right)\right|\leq\frac{C(x^2+x^{k+1}+x^{p+1})}{(1+u)^{2k-D}(1+r+u)^3}.
		\end{align}
		
		\item Estimate for $B_2$\\
		Using (\ref{g-gtilde k}) we obtain
		\begin{align}
		B_2=\left|\frac{1}{2r^2}\left(\frac{\hat{R}(\hat{\sigma})}{(D-2)}g-\frac{(D-2)}{2}\tilde{g}\right)\right|\leq\frac{C(x^2+x^{p+1})}{(1+u)^{2k-3}(1+r+u)^{3}}.
		\end{align}
		
		\item Estimate for $B_3$\\
		Using the relation $\left|\frac{\partial \tilde{h}}{\partial r}\right|=\frac{\left|h-\frac{(D-2)}{2}\tilde{h}\right|}{r}$, we get
		\begin{align}
		B_3 = \left|\frac{8\pi g r}{(D-2)} \frac{\partial V(\tilde{h})}{\partial \tilde{h}} \frac{\partial \tilde{h}}{\partial r}\right|\leq\frac{Cx^{p+1}}{(1+u)^{\frac{(2k-D)(k+1)}{2}}(1+r+u)^{\frac{(D-2)(k+1)}{2}}}.
		\end{align}
		
		\item Estimate for $B_4$\\
		Using (\ref{estimasi Potensial}) we obtain
		\begin{align}
		B_4=\left|\frac{8\pi g V(\tilde{h})}{(D-2)}\right|
		\leq \frac{C x^{p+1}}{(1+u)^{\frac{(2k-D)(k+1)}{2}}(1+r+u)^{\frac{(D-2)(k+1)}{2}}}.
		\end{align}
		
		\item Estimate for $B_5$\\
		Combination of (\ref{estimasi Potensial}) and (\ref{gr higher}) yields
		\begin{align}
		B_5 = \left|\frac{8\pi rV(\tilde{h})}{(D-2)}\frac{\partial g}{\partial r}\right|\leq\frac{Cx^{p+3}}{(1+u)^{3(2k-D)}(1+r+u)^{\frac{(D-2)(k+3)}{2}}}.
		\end{align}
		
		\item Estimate for $B_6$\\
		Using (\ref{g-gtilde k}) we obtain
		\begin{align}
		B_6=\left|\frac{1}{2r} \left(\frac{\hat{R}(\hat{\sigma})}{(D-2)}g-\frac{(D-2)}{2}\tilde{g}\right)\right|\leq\frac{C(x^2+x^{p+1})r}{(1+u)^{2k-3}(1+r+u)^{3}}.
		\end{align}
		
		\item Estimate for $B_7$\\
		The estimate (\ref{estimasi Potensial}) produces
		\begin{align}
		B_7=\left|\frac{8\pi g r V(\tilde{h})}{(D-2)}\right|\leq\frac{C x^{p+1}r}{(1+u)^{\frac{(2k-D)(k+1)}{2}}(1+r+u)^{\frac{(D-2)(k+1)}{2}}}.
		\end{align}
		
		\item Estimate for $B_8$\\
		Using (\ref{estimasi Potensial}) we have
		\begin{align}
		B_8 =\left|\frac{g r }{2}\frac{\partial^2V(\tilde{h})}{\partial\tilde{h}^2}\right|\leq \frac{C x^{p-1}r}{(1+u)^{\frac{(2k-D)(k-1)}{2}}(1+r+u)^{\frac{(D-2)(k-1)}{2}}}.
		\end{align}
		\item Estimate for $B_9$\\
		Again, using (\ref{estimasi Potensial}) we get
		\begin{align}
		B_9 =\left|\frac{r}{2}\frac{\partial g}{\partial r}+\frac{g}{2}\right|\left|\frac{\partial V(\tilde{h})}{\partial\tilde{h}}\right|\leq\frac{C x^{p+2}r}{(1+u)^{\frac{(2k-D)(k+1)}{2}}(1+r+u)^{\frac{(D-2)(k+2)+2}{2}}}.
		\end{align}
	\end{enumerate}
	Finally, we obtain
	\begin{align}
	|f_1|\leq\frac{C(x^2+x^{k+1}+x^{p+1}+x^{p+3})}{(1+u)^{2k-D}(1+r+u)^3}|\mathcal{F}|+\frac{C(x^3+x^{k+2}+x^{p+2}+x^{p+4})}{(1+u)^{2k-3}(1+r+u)^{\frac{D+4}{2}}}+\frac{C(x^3+x^{p}+x^{p+2})r}{(1+u)^{2k-3}(1+r+u)^{\frac{D+4}{2}}}.
	\end{align}
	
	\subsection{Estimate for (\ref{estimate psi2})}\label{Appendix3}
	We write the estimate of (\ref{estimate psi2}) as follows:
	\begin{align}
	|\tilde{\psi}|\leq&\frac{1}{2}\left|\tilde{g}_1-\tilde{g}_2\right||\mathcal{G}_2|+\frac{1}{2r}\left|\frac{\hat{R}(\hat{\sigma})}{(D-2)}g_1-\frac{(D-2)}{2}\tilde{g}_1\right|\frac{(D-2)}{2}\left|\tilde{h}_1 -\tilde{h}_2\right|\nonumber\\
	&+\frac{1}{2r}\left|\frac{\hat{R}(\hat{\sigma})}{(D-2)}(g_1-g_2)-\frac{(D-2)}{2}(\tilde{g}_1-\tilde{g}_2)\right||\mathcal{F}_2|+\frac{8\pi r g_2}{(D-2)}\left|V(\tilde{h}_1)-V(\tilde{h}_1)\right||\mathcal{F}_2|\nonumber\\
	&+\frac{1}{2r}\left|\frac{\hat{R}(\hat{\sigma})}{(D-2)}(g_1-g_2)-\frac{(D-2)}{2}(\tilde{g}_1-\tilde{g}_2)\right|\frac{(D-2)}{2}|\tilde{h}_2|+\frac{8\pi r|V(\tilde{h}_1)|}{(D-2)}|g_1-g_2||\mathcal{F}_1|\nonumber\\
	&+4\pi r|g_1-g_2||\tilde{h}_1| |V(\tilde{h}_1)|+4\pi r g_2|\tilde{h}_1 - \tilde{h}_2||V(\tilde{h}_2)|+4\pi r g_2 |\tilde{h}_1|\left|V(\tilde{h}_1)-V(\tilde{h}_2)\right|\nonumber\\
	&+\frac{r}{2}|g_1-g_2||\tilde{h}_1|\left|\frac{\partial^2V(\tilde{h}_1)}{\partial\tilde{h}_1^2 }\right|+\frac{r}{2}g_2|\tilde{h}_1-\tilde{h}_2|\left|\frac{\partial^2V(\tilde{h}_1)}{\partial\tilde{h}_1^2 }\right|+\frac{r}{2}g_2|\tilde{h}_1|\left|\frac{\partial^2V(\tilde{h}_1)}{\partial\tilde{h}_1^2 }-\frac{\partial^2V(\tilde{h}_2)}{\partial\tilde{h}_2^2 }\right|\nonumber\\
	:=& D_1+D_2+D_3+...+D_{10}+D_{11}+D_{12}
	\end{align}
	\begin{enumerate}
		\item Estimate for $D_1$\\
		Setting $\|h_1-h_2\|_Y=y$. Using (\ref{htilde}) we obtain
		\begin{align}\label{h1-h2}
		|\tilde{h}_1 - \tilde{h}_2|\leq\frac{2y}{(2k-D)}\frac{1}{(1+u)^{\frac{(2k-D)}{2}}(1+r+u)^{\frac{(D-2)}{2}}}.
		\end{align}
		To proceed, we calculate
		\begin{align}
		|h_1-h_2-(\tilde{h}_1-\tilde{h}_2)|\leq& |h_1-h_2|+|\tilde{h}_1-\tilde{h}_2|\nonumber\\
		\leq&\frac{Cy}{(1+u)^{\frac{(2k-D)}{2}}(1+r+u)^{\frac{(D-2)}{2}}},
		\end{align}
		and
		\begin{align}
		\left||h_1-\tilde{h}_1|^2-|h_2-\tilde{h}_2|^2\right|\leq& \left|(h_1-h_2)-(\tilde{h}_1-\tilde{h}_2) \right|\left(|h_1-\tilde{h}_1|+|h_2-\tilde{h}_2|\right)\nonumber\\
		\leq& \frac{Cxy}{(1+u)^{2k-D}(1+r+u)^{D-2}}.
		\end{align}
		As a consequence, we get
		\begin{align}
		|g_1-g_2|\leq&\frac{8\pi}{(D-2)} \int_{r}^{\infty}\frac{1}{s}\left||h_1-\tilde{h}_1|^2-|h_2-\tilde{h}_2|^2\right|\mathrm{d}s\nonumber\\
		\leq& \frac{Cxy}{(1+u)^{(2k-D)}(1+r+u)^{(D-2)}}.
		\end{align}
		Then, we calculate
		\begin{align}
		\tilde{h}_1^{p+1} - \tilde{h}_2^{p+1}=(\tilde{h}_1-\tilde{h}_2)\int_{0}^{1}\left(t\tilde{h}_1+(1-t)\tilde{h}_2\right)\left|t\tilde{h}_1+(1-t)\tilde{h}_2\right|^{p-1}\mathrm{d}t,
		\end{align}
		yields
		\begin{align}
		\left|\tilde{h}_1^{p+1}-\tilde{h}_2^{p+1}\right|\leq& |\tilde{h}_1-\tilde{h}_2|\left(|\tilde{h}_1|+|\tilde{h}_2|\right)^p\nonumber\\
		\leq&\frac{2^{2p+1}x^py}{(2k-D)^{k+1}(1+u)^{\frac{(2k-D)(k+1)}{2}}(1+r+u)^{\frac{(D-2)(k+1)}{2}}}.
		\end{align}
		We estimate
		\begin{align}
		\frac{16\pi}{(D-2)}\frac{1}{r^{D-3}}\int_{0}^{r} g_1s^{D-2}\left|V(\tilde{h}_1)-V(\tilde{h}_2)\right|\mathrm{d}s\leq\frac{Cx^py}{(1+u)^{k^2-D}(1+r+u)^{D-3}}.
		\end{align}
		Then, we obtain
		\begin{align}
		|\bar{g}_1-\bar{g}_2|\leq\frac{1}{r}\int_{0}^{r}|g_1-g_2|\mathrm{d}s\leq\frac{Cxy}{(1+u)^{2k-3}(1+r+u)},
		\end{align}
		and
		\begin{align}
		\frac{(D-4)}{(D-2)}\frac{\hat{R}(\hat{\sigma})}{r^{D-3}}\int_{0}^{r}\left|\bar{g}_1-\bar{g}_2\right|s^{D-4}\mathrm{d}s\leq \frac{Cxy}{(1+u)^{2k-3}(1+r+u)}.
		\end{align}
		From the definition (\ref{gtilde}) we get
		\begin{align}
		\left|\tilde{g}_1-\tilde{g}_2\right|\leq& \frac{\hat{R}(\hat{\sigma})}{(D-2)}\left|\bar{g}_1-\bar{g}_2\right|+\frac{(D-4)}{(D-2)}\frac{\hat{R}(\hat{\sigma})}{r^{D-3}}\int_{0}^{r}\left|\bar{g}_1-\bar{g}_2\right|s^{D-4}\mathrm{d}s\nonumber\\
		&+ \frac{16\pi}{(D-2)}\frac{1}{r^{D-3}}\int_{0}^{r} g_1s^{D-2}\left|V(\tilde{h}_1)-V(\tilde{h}_2)\right|\mathrm{d}s\nonumber\\
		\leq&\frac{Cy(x+x^p)}{(1+u)^{2k-3}(1+r+u)}.
		\end{align}
		Finally, we obtain
		\begin{align}
		D_1=&\frac{1}{2}|\tilde{g}_1-\tilde{g}_2||\mathcal{G}_2|\leq\frac{Cy(x+x^p)}{(1+u)^{2k-3}(1+r+u)}\frac{C\exp\left[C(x^2+x^{k+1}+x^{p+1})\right]}{\kappa^k(1+r_1+u_1)^k}\nonumber\\
		&\times(d+x^3+x^{k+2}+x^p+x^{p+2}+x^{p+4})(1+x^2+x^{k+1}+x^{p+1}+x^{p+3}).
		\end{align}
		Since $1+r(u)+u\geq \frac{1}{2}\kappa(1+r_1+u_1)$, we have
		\begin{align}
		D_1\leq \frac{Cy\alpha(x)}{(1+u)^{2k-3}(1+r+u)^{k+1}}
		\end{align}
		where $\alpha(x)=C(x+x^p)\left(d+x^3+x^{k+2}+x^p+x^{p+2}+x^{p+4}\right)(1+x^2+x^{k+1}+x^{p+1}+x^{p+3})\exp\left[C(x^2+x^{k+1}+x^{p+1})\right].$
		\item Estimate for $D_2$\\
		Combining (\ref{g-gtilde k}) and (\ref{h1-h2}) we obtain
		\begin{align}
		D_2=\frac{1}{2r}\left|\frac{\hat{R}(\hat{\sigma})}{2}g_1-\frac{(D-2)^2}{4}\tilde{g}_1\right||\tilde{h}_1-\tilde{h}_2|\leq \frac{C(x^2+x^{p+1})y}{(1+u)^{\frac{6(k-1)-D}{2}}(1+r+u)^{\frac{D+2}{2}}}.
		\end{align}
		\item Estimate for $D_3$\\
		We firstly calculate
		\begin{align}
		\left|\frac{\hat{R}(\hat{\sigma})}{(D-2)}(g_1-g_2)-\frac{\hat{R}(\hat{\sigma})}{2}(\bar{g}_1-\bar{g}_2)\right|\leq&\left|g_1-g_2-(\bar{g}_1-\bar{g}_2) \right|
		\leq\frac{1}{r}\int_{0}^{r}\int_{r'}^{r}\left|\frac{\partial}{\partial r}(g_1-g_2)\right|\mathrm{d}s~\mathrm{d}r'\nonumber\\
		\leq&\frac{8\pi}{(D-2)r}\int_{0}^{r}\int_{r'}^{r}\frac{1}{s}\left||h_1-\tilde{h}_1|^2-|h_2-\tilde{h}_2|^2\right|\mathrm{d}s~\mathrm{d}r'\nonumber\\
		\leq& \frac{Cxy}{r(1+u)^{2k-D}}\int_{0}^{r}\int_{r'}^{r}\frac{\mathrm{d}s}{(1+s+u)^{D-1}}\mathrm{d}r'\nonumber\\
		\leq& \frac{Cxyr}{(1+u)^{2k-3}(1+r+u)^{D-2}}.
		\end{align}
		Thus,
		\begin{align}\label{g-gtilde}
		&\frac{1}{2r}\left|\frac{\hat{R}(\hat{\sigma})}{(D-2)}(g_1-g_2)-\frac{(D-2)}{2}(\tilde{g}_1-\tilde{g}_2)\right|\leq\frac{1}{2r}\left[	\left|\frac{\hat{R}(\hat{\sigma})}{(D-2)}(g_1-g_2)-\frac{\hat{R}(\hat{\sigma})}{2}(\bar{g}_1-\bar{g}_2)\right|\right.\nonumber\\
		&\left.+\frac{(D-4)}{2}\frac{\hat{R}(\hat{\sigma})}{r^{D-3}}\int_{0}^{r}\left|\bar{g}_1-\bar{g}_2\right|s^{D-4}\mathrm{d}s+\frac{8\pi}{r^{D-3}}\int_{0}^{r} g_1s^{D-2}\left|V(\tilde{h}_1)-V(\tilde{h}_2)\right|\mathrm{d}s\right]\nonumber\\
		&\leq\frac{C(x+x^p)y}{(1+u)^{2k-3}(1+r+u)^{D-2}}.
		\end{align}
		Finally, we obtain
		\begin{align}
		D_3=&\frac{1}{2r}\left|\frac{\hat{R}(\hat{\sigma})}{(D-2)}(g_1-g_2)-\frac{(D-2)}{2}(\tilde{g}_1-\tilde{g}_2)\right||\mathcal{F}_2|\nonumber\\
		\leq& \frac{C(x+x^p)y}{(1+u)^{2k-3}(1+r+u)^{D-2}}\frac{C(d+x^3+x^p+x^{p+2})\exp\left[C(x^2+x^{p+1})\right]}{\kappa^{k-1}(1+r_1+u_1)^{k-1}}.
		\end{align}
		Since $1+r(u)+u\geq \frac{1}{2}\kappa(1+r_1+u_1)$, we have
		\begin{align}
		D_3\leq\frac{C y \beta(x)}{(1+u)^{2k-3}(1+r+u)^{D+k-3}},
		\end{align}
		where $\beta(x)=C(x+x^p)(d+x^3+x^p+x^{p+2})\exp\left[C(x^2+x^{p+1})\right].$
		
		\item Estimate for $D_4$\\
		Using (\ref{estimasi Potensial}) and (\ref{estimate Fcurl}) we obtain
		\begin{align}
		D_4=&\frac{8\pi r g_2}{(D-2)}	\left|V(\tilde{h}_1)-V(\tilde{h}_2)\right||\mathcal{F}_2|\nonumber\\
		\leq& \frac{Crx^py}{(1+u)^{\frac{(2k-D)(k+1)}{2}}(1+r+u)^{\frac{(D-2)(k+1)}{2}}}\frac{C(d+x^3+x^p+x^{p+2})\exp\left[C(x^2+x^{p+1})\right]}{\kappa^{k-1}(1+r_1+u_1)^{k-1}}.
		\end{align}
		Since $1+r(u)+u\geq \frac{1}{2}\kappa(1+r_1+u_1)$, we have
		\begin{align}
		D_4\leq\frac{Cy \gamma(x)}{(1+u)^{2(2k-D)}(1+r+u)^{2(D-2)+1}},
		\end{align}
		where $\gamma(x)=C x^p(d+x^3+x^p+x^{p+2})\exp[C(x^2+x^{p+1})].$
		
		\item Estimate for $D_5$\\
		Combination of (\ref{estimate htilde}) and (\ref{g-gtilde}) yields
		\begin{align}
		D_5=\frac{1}{2r}\left|\frac{\hat{R}(\hat{\sigma})}{2}(g_1-g_2)-\frac{(D-2)^2}{4}(\tilde{g}_1-\tilde{g}_2)\right||\tilde{h}_2|
		\leq\frac{C(x^2+x^{p+1})y}{(1+u)^{\frac{6(k-1)-D}{2}}(1+r+u)^{\frac{3(D-2)}{2}}}.
		\end{align}
		\item Estimate for $D_6$\\
		We calculate
		\begin{align}
		D_6=&\frac{8\pi r}{(D-2)}|g_1-g_2||V(\tilde{h})||\mathcal{F}_1|\nonumber\\
		\leq&\frac{8\pi r}{(D-2)}\frac{xy}{(1+u)^{(2k-D)}(1+r+u)^{(D-2)}}\frac{K_0x^{p+1}}{(1+u)^{\frac{(2k-D)(k+1)}{2}}(1+r+u)^{\frac{(D-2)(k+1)}{2}}}\nonumber\\
		&\times\frac{C(d+x^3+x^p+x^{p+2})\exp\left[C(x^2+x^{p+1})\right]}{\kappa^{k-1}(1+r_1+u_1)^{k-1}}.
		\end{align}
		Since $1+r(u)+u\geq \frac{1}{2}\kappa(1+r_1+u_1)$, we obtain
		\begin{align}
		D_6\leq\frac{Cy\sigma(x)}{(1+u)^{3(2k-D)}(1+r+u)^{3(D-2)+k-2}},
		\end{align}
		where $\sigma(x)=Cx^{p+2}(d+x^3+x^p+x^{p+2})\exp[C(x^2+x^{p+1})].$
		
		\item Estimate for $D_7$\\
		We use (\ref{estimasi Potensial}) and (\ref{estimate htilde}) to obtain
		\begin{align}
		D_7 = 4\pi r|g_1-g_2||\tilde{h}_1| |V(\tilde{h}_1)|	\leq\frac{Cyx^{p+3}}{(1+u)^{\frac{7(2k-D)}{2}}(1+r+u)^{\frac{6(2k-D)}{2}}}.
		\end{align}
		\item Estimate for $D_8$\\
		Combining (\ref{estimasi Potensial}) and (\ref{estimate h1-h2}) such that
		\begin{align}
		D_8 = 4\pi r g_2|\tilde{h}_1 - \tilde{h}_2||V(\tilde{h}_2)\leq\frac{Cyx^{p+1}}{(1+u)^{\frac{5(2k-D)}{2}}(1+r+u)^{2(2k-D)}}.
		\end{align}
		\item Estimate for $D_9$\\
		We calculate
		\begin{align}
		D_9 = 4\pi r g_2 |\tilde{h}_1|\left|V(\tilde{h}_1)-V(\tilde{h}_2)\right|\leq\frac{Cyx^{p+1}}{(1+u)^{\frac{5(2k-D)}{2}}(1+r+u)^{\frac{4(2k-D)}{2}}}.
		\end{align}
		\item Estimate for $D_{10}$\\
		Combination of (\ref{estimasi Potensial}) and (\ref{estimate htilde}) yields
		\begin{align}
		D_{10}=\frac{r}{2}|g_1-g_2||\tilde{h}_1|\left|\frac{\partial^2V(\tilde{h}_1)}{\partial\tilde{h}_1^2 }\right|\leq\frac{Cyx^{p+1}}{(1+u)^{\frac{5(2k-D)}{2}}(1+r+u)^{2(2k-D)}}.
		\end{align}
		\item Estimate for $D_{11}$\\
		Again, we use (\ref{estimasi Potensial}) and (\ref{estimate h1-h2}) such that
		\begin{align}
		D_{11} = \frac{r}{2}g_2|\tilde{h}_1-\tilde{h}_2|\left|\frac{\partial^2V(\tilde{h}_1)}{\partial\tilde{h}_1^2 }\right|\leq\frac{Cyx^{p-1}}{(1+u)^{\frac{3(2k-D)}{2}}(1+r+u)^{D-2}}.
		\end{align}
		\item Estimate for $D_{12}$\\
		We calculate
		\begin{align}
		\left|\tilde{h}_1^{p-1}-\tilde{h}_2^{p-1}\right|\leq |\tilde{h}_1-\tilde{h}_2|\left(|\tilde{h}_1|+|\tilde{h}_2|\right)^{p-2}\leq \frac{Cyx^{p-2}}{(1+u)^{\frac{(2k-D)(k-1)}{2}}(1+r+u)^{\frac{(D-2)(k-1)}{2}}}.
		\end{align}
		Hence,
		\begin{align}
		D_{12}=\frac{r}{2}g_2|\tilde{h}_1|\left|\frac{\partial^2V(\tilde{h}_1)}{\partial\tilde{h}_1^2 }-\frac{\partial^2V(\tilde{h}_2)}{\partial\tilde{h}_2^2 }\right| \leq\frac{Cyx^{p+1}}{(1+u)^{\frac{3(2k-D)}{2}}(1+r+u)^{D-2}}.
		\end{align}
	\end{enumerate}
	
	Finally, we obtain
	\begin{align}
	|\tilde{\psi}|\leq\frac{Cy\left[\alpha(x)+\beta(x)+\gamma(x)+\sigma(x)+x^2+x^{p-1}+x^{p+1}+x^{p+3}\right]}{(1+u)^{\frac{3(2k-D)}{2}}(1+r+u)^{D-2}}
	\end{align}
	where $\alpha(x),\beta(x),\gamma(x),$ and $\sigma(x)$ are defined in (\ref{alpha})-(\ref{sigma}) respectively.
	
	\subsection{Estimate for (\ref{estimate dhdu})}\label{Appendix4}
	We write the estimate of (\ref{estimate dhdu}) as follows
	\begin{align}
	\left|\frac{\partial h}{\partial u}\right|\leq& \frac{\tilde{g}}{2}\left|\frac{\partial h}{\partial r}\right| + \frac{1}{2r}\left|\frac{\hat{R}(\hat{\sigma})}{(D-2)}g-\frac{(D-2)}{2}\tilde{g}\right|\left|h\right|+\frac{(D-2)}{4r}\left|\frac{\hat{R}(\hat{\sigma})}{(D-2)}g-\frac{(D-2)}{2}\tilde{g}\right|\left|\tilde{h}\right|\nonumber\\
	&+\frac{8\pi gr}{(D-2)}\left|h\right||V(\tilde{h})|+4\pi gr|\tilde{h}||V(\tilde{h})|+\frac{gr}{2}\left|\frac{\partial V(\tilde{h})}{\partial \tilde{h}}\right|\nonumber\\
	=& H_1+H_2+H_3+H_4+H_5+H_6.
	\end{align}
	\begin{enumerate}
		\item Estimate for $H_1$\\
		Using (\ref{estimate Gcurl}) we obtain
		\begin{align}
		\frac{\tilde{g}}{2}\left|\frac{\partial h}{\partial r}\right|\leq\frac{K_1(x)}{\kappa^{k}(1+r_1+u_1)^{k}}\leq \frac{K_1(x)}{(1+r+u)^{k}}
		\end{align}
		where $K_1(x)=C(d+x^3+x^{k+2}+x^p+x^{p+2}+x^{p+4})(1+x^2+x^{k+1}+x^{p+1}+x^{p+3})\exp\left[C(x^2+x^{k+1}+x^{p+1})\right].$
		
		\item Estimate for $H_2$\\
		Using (\ref{estimate Fcurl}) we obtain
		\begin{align}
		\frac{1}{2r}\left|\frac{\hat{R}(\hat{\sigma})}{(D-2)}g-\frac{(D-2)}{2}\tilde{g}\right|\left|h\right|\leq\frac{K_2(x)}{(1+u)^{2k-3}(1+r+u)^{k+1}},
		\end{align}
		where $K_2(x)=C(x^2+x^{p+1})(d+x^3+x^p+x^{p+2})\exp\left[C(x^2+x^{p+1})\right].$
		
		\item Estimate for $H_3$\\
		We use (\ref{estimate htilde}) and (\ref{g-gtilde k}) to obtain
		\begin{align}
		\frac{(D-2)}{4r}\left|\frac{\hat{R}(\hat{\sigma})}{(D-2)}g-\frac{(D-2)}{2}\tilde{g}\right|\left|\tilde{h}\right|\leq\frac{C(x^3+x^{p+2})}{(1+u)^{\frac{6(k-1)-D}{2}}(1+r+u)^{\frac{D+2}{2}}}.
		\end{align}
		
		\item Estimate for $H_4$\\
		Combination of (\ref{estimasi Potensial}) and (\ref{estimate Fcurl}) yields
		\begin{align}
		\frac{8\pi gr}{(D-2)}\left|h\right||V(\tilde{h})|\leq\frac{K_3(x)}{(1+u)^{2(2k-D)}(1+r+u)^{2(2k-D)+k-2}},
		\end{align}
		where $K_3(x)=Cx^{p+1}(d+x^3+x^p+x^{p+2})\exp\left[C(x^2+x^{p+1})\right].$
		
		\item Estimate for $H_5$\\
		Combination of (\ref{estimasi Potensial}) and (\ref{estimate htilde}) yields
		\begin{align}
		4\pi gr|\tilde{h}||V(\tilde{h})|\leq \frac{Cx^{p+2}}{(1+u)^{\frac{5(2k-D)}{2}}(1+r+u)^{\frac{5(4k-3D+2)-2}{2}}}. 
		\end{align}
		\item Estimate for $H_6$\\
		Again, from (\ref{estimasi Potensial}) we get
		\begin{align}
		\frac{gr}{2}\left|\frac{\partial V(\tilde{h})}{\partial \tilde{h}}\right|\leq \frac{Cx^p}{(1+u)^{\frac{3(2k-D)}{2}}(1+r+u)^{\frac{3(D-2)}{2}-1}}.
		\end{align}
	\end{enumerate}
	
	Finally, we obtain
	\begin{align}
	\left|\frac{\partial h}{\partial u}\right|\leq\frac{C(K_1(x)+K_2(x)+K_3+x^3+x^p+x^{p+2})}{(1+u)^{\frac{3(2k-D)}{2}}(1+r+u)^{\frac{3(D-2)}{2}-1}},
	\end{align}
	where $K_1, K_2,$ and $K_3$ are defined in (\ref{K1})-(\ref{K3}) respectively.
	
	\section*{Acknowledgments}
	The work of this research is supported by PDD Kemendikbudristek 2023,  PPMI FMIPA ITB 2023, PPMI KK ITB 2023, and GTA 50 ITB. 
	
	\section*{Data Availability}
	This manuscript has no associated data.
	
	\section*{Conflict of Interest}
	No conflict of interest in this paper.
	
	\appendix

\end{document}